\setlist{nosep}
\newtheorem{theorem}{Theorem}
\newcommand{\cF}{\mathcal{F}}
\newcommand{\cshare}[1]{\langle {#1}\rangle_C}
\newcommand{\sshare}[1]{\langle {#1}\rangle_S}
\newcommand{\Zbb}{\mathbb{Z}}
\newcommand{\Rbb}{\mathbb{R}}
\newcommand{\kcl}{k_{\mathrm{c}}}
\newcommand{\knn}{k_{\mathrm{nn}}}
\newcommand{\bcoord}{b_{\mathrm{c}}}
\newcommand{\bdist}{b_{\mathrm{d}}}
\newcommand{\bpid}{b_{\mathrm{pid}}}
\newcommand{\bcid}{b_{\mathrm{cid}}}
\newcommand{\rcenters}{r_{\mathrm{c}}}
\newcommand{\rpoints}{r_{\mathrm{p}}}
\newcommand{\lstash}{l_{\mathrm{s}}}
\newcommand{\uall}{u_{\mathrm{all}}}
\newcommand{\MIN}{\mathrm{MIN}}
\newcommand{\Fatopk}{\cF_{\mathrm{aTOPk}}}
\newcommand{\Ftopk}{\cF_{\mathrm{TOPk}}}
\newtheorem{definition}{Definition}
\newcommand{\bZ}{\mathbb{Z}}
\newcommand{\Fcl}{\cF_{\mathrm{ANN_{cl}}}}
\newcommand{\Fls}{\cF_{\mathrm{ANN_{ls}}}}
\newcommand{\Pls}{\Pi_{\mathrm{ANN_{ls}}}}
\newcommand{\Pcl}{\Pi_{\mathrm{ANN_{cl}}}}
\newcommand{\bR}{\mathbb{R}}
\newcommand{\ID}{\mathrm{ID}}
\newcommand{\pt}[2]{{\bf #1}_{#2}}
\newcommand*{\tikzmkone}[1]{\tikz[remember picture,overlay,] \node (#1) {};\ignorespaces}
\newcommand{\boxitone}[1]{\tikz[remember picture,overlay]{\node[yshift=-7pt,fill=#1,opacity=.25,fit={($(A)-(.28\linewidth,-0.1\baselineskip)$)($(B)+(.57\linewidth,0.6\baselineskip)$)}] {};}\ignorespaces}
\newcommand*{\tikzmkonetwo}[1]{\tikz[remember picture,overlay,] \node (#1) {};\ignorespaces}
\newcommand{\boxitonetwo}[1]{\tikz[remember picture,overlay]{\node[yshift=-7pt,fill=#1,opacity=.25,fit={($(A)-(.264\linewidth,-0.1\baselineskip)$)($(B)+(.625\linewidth,0.6\baselineskip)$)}] {};}\ignorespaces}
\newcommand*{\tikzmktwo}[1]{\tikz[remember picture,overlay,] \node (#1) {};\ignorespaces}
\newcommand{\boxittwo}[1]{\tikz[remember picture,overlay]{\node[yshift=-7pt,fill=#1,opacity=.25,fit={($(A)-(.16\linewidth,-0.1\baselineskip)$)($(B)+(.014\linewidth,0.5\baselineskip)$)}] {};}\ignorespaces}
\newcommand*{\tikzmkthree}[1]{\tikz[remember picture,overlay,] \node (#1) {};\ignorespaces}
\newcommand{\boxitthree}[1]{\tikz[remember picture,overlay]{\node[yshift=-7pt,fill=#1,opacity=.25,fit={($(A)-(.3\linewidth,-0.1\baselineskip)$)($(B)+(.51\linewidth,0.42\baselineskip)$)}] {};}\ignorespaces}
\newcommand*{\tikzmkthreeagain}[1]{\tikz[remember picture,overlay,] \node (#1) {};\ignorespaces}
\newcommand{\boxitthreeagain}[1]{\tikz[remember picture,overlay]{\node[yshift=-7pt,fill=#1,opacity=.25,fit={($(A)-(.287\linewidth,0.1\baselineskip)$)($(B)+(.59\linewidth,0.5\baselineskip)$)}] {};}\ignorespaces}
\newcommand*{\tikzmkfour}[1]{\tikz[remember picture,overlay,] \node (#1) {};\ignorespaces}
\newcommand{\boxitfour}[1]{\tikz[remember picture,overlay]{\node[yshift=-7pt,fill=#1,opacity=.25,fit={($(A)-(.138\linewidth,-0.1\baselineskip)$)($(B)+(-.011\linewidth,0.33\baselineskip)$)}] {};}\ignorespaces}
\newcommand*{\tikzmkfive}[1]{\tikz[remember picture,overlay,] \node (#1) {};\ignorespaces}
\newcommand{\boxitfive}[1]{\tikz[remember picture,overlay]{\node[yshift=-7pt,fill=#1,opacity=.25,fit={($(A)-(.24\linewidth,-0.1\baselineskip)$)($(B)+(.587\linewidth,0.5\baselineskip)$)}] {};}\ignorespaces}
\newcommand*{\tikzmkfiveagain}[1]{\tikz[remember picture,overlay,] \node (#1) {};\ignorespaces}
\newcommand{\boxitfiveagain}[1]{\tikz[remember picture,overlay]{\node[yshift=-7pt,fill=#1,opacity=.25,fit={($(A)-(.287\linewidth,0.03\baselineskip)$)($(B)+(.656\linewidth,0.5\baselineskip)$)}] {};}\ignorespaces}
\newcommand*{\tikzmksix}[1]{\tikz[remember picture,overlay,] \node (#1) {};\ignorespaces}
\newcommand{\boxitsix}[1]{\tikz[remember picture,overlay]{\node[yshift=-7pt,fill=#1,opacity=.25,fit={($(A)-(.15\linewidth,-0.15\baselineskip)$)($(B)+(1.0\linewidth,1.2\baselineskip)$)}] {};}\ignorespaces}
\newcommand*{\tikzmkseven}[1]{\tikz[remember picture,overlay,] \node (#1) {};\ignorespaces}
\newcommand{\boxitseven}[1]{\tikz[remember picture,overlay]{\node[yshift=-7pt,fill=#1,opacity=.25,fit={($(A)-(.145\linewidth,0)$)($(B)+(.53\linewidth,0)$)}] {};}\ignorespaces}
\colorlet{he_color}{blue!40}
\colorlet{gc_color}{yellow!60}
\colorlet{oram_color}{red!60}
\begin{document}

\date{}

\title{\Large \bf \textsf{SANNS:} Scaling Up Secure Approximate\\ $k$-Nearest Neighbors Search}

\author{
{\rm Hao Chen}\\
Microsoft Research
\and
{\rm Ilaria Chillotti}\\
imec-COSIC KU Leuven \& Zama
\and
{\rm Yihe Dong}\\
Microsoft
\and
{\rm Oxana Poburinnaya}\\
Simons Institute
\and
{\rm Ilya Razenshteyn}\\
Microsoft Research
\and
{\rm M.\ Sadegh Riazi}\\
UC San Diego
}

\maketitle

\begin{abstract}
The $k$-Nearest Neighbor Search ($k$-NNS) is the backbone of several cloud-based services such as recommender systems, face recognition, and database search on text and images. 
In these services, the client sends the query to the cloud server and receives the response in which case the query and response are revealed to the service provider. 
Such data disclosures are unacceptable in several scenarios due to the sensitivity of data and/or privacy laws.

In this paper, we introduce \textsf{SANNS}, a system for secure $k$-NNS that keeps client's query and the search result confidential. 
\textsf{SANNS} comprises two protocols: an optimized linear scan and a protocol based on a novel \emph{sublinear time} clustering-based algorithm. 
We prove the security of both protocols in the standard semi-honest model. 
The protocols are built upon several state-of-the-art cryptographic primitives such as lattice-based additively homomorphic encryption, distributed oblivious RAM, and garbled circuits.
We provide several contributions to each of these primitives which are applicable to other secure computation tasks. 
Both of our protocols rely on a new circuit for the approximate top-$k$ selection from $n$~numbers that is built from $O(n + k^2)$ comparators.

We have implemented our proposed system and performed extensive experimental results on four datasets in two different computation environments, demonstrating more than \emph{$18-31\times$ faster response time} compared to optimally implemented
protocols from the prior work. 
Moreover, \textsf{SANNS} is the first work that scales to the database of 10 million entries, pushing the limit by more than \emph{two orders of magnitude}.

\end{abstract}

\section{Introduction}

The \emph{$k$-Nearest Neighbor Search problem} ($k$-NNS) is defined as follows. For a given $n$-point dataset $X \subset \Rbb^d$, and a query point $\pt{q}{} \in \Rbb^d$, find (IDs of) $k$ data points closest (with respect to the Euclidean distance) to the query.
The $k$-NNS has many applications in modern data analysis:
one typically starts with a dataset (images, text, etc.) and, using domain expertise together with machine learning, produces its \emph{feature vector
representation}. Then, \emph{similarity search} queries (``find $k$~objects most similar to the query'') directly translate to $k$-NNS queries in the feature space. 
Even though some applications of $k$-NNS benefit from non-Euclidean distances~\cite{asharov2018privacy}, the overwhelming majority of
applications (see~\cite{aumuller2017ann} and the references therein) utilize Euclidean distance or cosine similarity, which can be modeled as Euclidean distance on a unit sphere.

When it comes to applications dealing with sensitive information, such as medical, biological or financial data, the privacy of both the dataset and the queries needs to be ensured. Therefore, the ``trivial solution'' where the server sends the entire dataset to the client or the client sends the plaintext query to the server would not work, since we would like to protect the input from both sides. Such settings include: face recognition~\cite{ppface,sadeghi2009efficient}, biometric identification~\cite{evans2011efficient,fingercode,demmler2015aby},
patient data search in a hospital~\cite{shaul2018scalable,asharov2018privacy} and many others.
One can pose the \emph{Secure $k$-NNS} problem, which has the same functionality as the $k$-NNS problem, and at the same time preserves the privacy of the input: the server---who holds the dataset---should learn nothing about the query or the result, while the client---who has the query---should not learn anything about the dataset
besides the $k$-NNS result. 

Secure $k$-NNS is a heavily studied problem in a variety of settings (see Section~\ref{sec_related_works} for the related work). In this paper, we consider one of the most conservative security requirements
of \emph{secure two-party computation}~\cite{evans2018pragmatic}, where the protocol is not allowed to reveal anything beyond the \emph{output} of the respective \emph{plaintext} $k$-NNS algorithm.
Note that we do not rely on a trusted third party (which is hardly practical) or trusted hardware such as Intel SGX\footnote{While the trust model of cryptographic solutions is based on computational hardness assumptions, Trusted Execution Environments (TEE)-based methodologies, such as Intel SGX, require remote attestation before the computation can begin. As a result, TEE-based solutions need to trust the hardware vendor as well as TEE implementation.} (which is known to have major security issues: see, e.g.,~\cite{van2018foreshadow}).

In this paper, we describe \textsf{SANNS}: a system for fast processing of secure $k$-NNS queries that works in the two-party secure computation setting.
The two main contributions underlying \textsf{SANNS} are the following.
First, we provide an improved secure protocol for the top-$k$ selection.
Second,
we design a new $k$-NNS algorithm tailored to secure computation,
which is implemented
using a combination of Homomorphic Encryption (HE), Garbled Circuits (GC) and Distributed Oblivious RAM (DORAM)
as well as the above top-$k$ protocol. Extensive experiments on real-world image and text data show that \textsf{SANNS} achieves a speed-up of up to $31\times$ compared to (carefully implemented and heavily optimized) algorithms from the prior work.

\vspace{0.3em}
\noindent {\bf Trust model} We prove simulation-based security of \textsf{SANNS} in the semi-honest model, where both parties follow the protocol specification while
trying to infer information about the input of the other party from the received messages.
This is an appropriate model for parties that in general trust each other (e.g., two companies or hospitals) but need to run
a secure protocol due to legal restrictions. Most of the instances of secure multi-party computation
deployed in the real world operate in the semi-honest model: 
computing gender pay gap~\cite{bestavros2017user}, sugar beets auctions~\cite{bogetoft2009secure}, and others. Our protocol yields a substantial improvement over prior works under the same trust model.
Besides, any semi-honest protocol can be reinforced to be maliciously secure (when parties
are allowed to tamper actively with the sent messages), though it incurs a significant performance overhead~\cite{goldreich1987play}.

\subsection{Specific Contributions}

Underlying \textsf{SANNS} are two new algorithms for the $k$-NNS problem.
The first one is based on \emph{linear scan,} where we compute distances to all the points, and then select the $k$ closest ones.
The improvement comes from the new top-$k$ selection protocol.
The second algorithm has \emph{sublinear} time avoiding computing all the distances.
At a high level, it proceeds by clustering the dataset using the $k$-means algorithm~\cite{lloyd1982least}, then, given a query point, we compute several closest clusters,
and then compute $k$ closest points within these clusters.
The resulting points are \emph{approximately} closest; it is known that approximation is necessary for \emph{any} sublinear-time $k$-NNS algorithm~\cite{razenshteyn2017high}\footnote{At the same time, approximation is often acceptable in practice, since feature vectors are themselves merely approximation of the ``ground truth''}.
In order to be suitable for secure computation, we introduce a new cluster \emph{rebalancing} subroutine, see below.
Let us note that among the \emph{plaintext} $k$-NNS algorithms, the clustering approach is far from being the best~\cite{aumuller2017ann}, but we find it to be particularly suitable for secure computation.

For both algorithms, we use Additive Homomorphic Encryption (AHE) for secure distance computation and garbled circuit for the top-$k$ selection.
In case of our sublinear-time algorithm, we also use DORAM to securely retrieve the clusters closest to the query.
For AHE, we use the SEAL library~\cite{sealcrypto} which implements the Brakerski/Fan-Vercauteren (BFV) scheme~\cite{FV12}. For GC we use our own implementation of Yao's protocol~\cite{Yao86} with the standard optimizations~\cite{beaver1990round,KS08,bellare2013efficient,ZRE15},
and for DORAM we implement Floram~\cite{doerner2017scaling} in the read-only mode.

Our specific contributions can be summarized as follows:
\begin{itemize}[leftmargin=0mm]
\item We propose a novel mixed-protocol solution based on AHE, GC, and DORAM that is tailored for secure $k$-NNS and achieves more than $31\times$ performance improvement compared to prior art with the same security guarantees.
\item We design and analyze an improved circuit for approximate top-$k$ selection.
The secure top-$k$ selection protocol within \textsf{SANNS} is obtained by garbling this circuit. This improvement is likely to be of independent interest for a range of other secure computation tasks. 
\item We create a clustering-based algorithm that outputs \emph{balanced} clusters, which significantly reduces the overhead of oblivious RAMs for secure random accesses. 
\item We build our system and evaluate it on various real-world datasets of text and images. We run experiments on two computation environments that represent fast and slow network connections in practice.
\item We make several optimizations to the AHE, GC, and DORAM cryptographic primitives to improve efficiency of our protocol. 
Most notably, in Floram~\cite{doerner2017scaling}, we substitute block cipher for stream cipher, yielding a speed-up by more than an order of magnitude. \end{itemize}

\subsection{Related Work}
\label{sec_related_works}

To the best of our knowledge, all prior work on the secure $k$-NNS problem in the secure two-party computation setting
is based on the \emph{linear scan}, where we first compute the distance between the query and all of $n$ database points, and then select $k$ smallest of them.
To contrast, our clustering-based algorithm is \emph{sublinear}, which leads to a substantial speed-up.
We classify prior works based on the approaches used for distance computation and for top-$k$ selection.

\vspace{0.3em}
\noindent {\bf Distance computation} \textsf{SANNS} computes distances using the BFV scheme~\cite{FV12}.
Alternative approaches used in the prior work are:
\begin{itemize}[leftmargin=0mm]
\item Paillier scheme~\cite{Pai99} used for $k$-NNS in~\cite{ppface,sadeghi2009efficient,evans2011efficient,fingercode,elmehdwi2014secure}. Unlike the BFV scheme, Paillier scheme does not support massively vectorized SIMD operations, and,
in general, is known to be much slower than the BFV scheme for vector/matrix operations such as
a batched Euclidean distance computation: see, e.g.,~\cite{gazelle}.
\item OT-based multiplication is used for $k$-NNS
in~\cite{demmler2015aby} for $k = 1$. Compared to the BFV scheme, OT-based approach requires much more
communication, $O(n + d)$ vs.\ $O(nd)$, respectively, while being slightly less compute-intensive.
In our experiments, we find that the protocol from~\cite{mohassel2017secureml} that is carefully tailored
to the matrix operations (and is, thus, significantly faster than the generic one used in~\cite{demmler2015aby}) is as fast as AHE on the fast network, but significantly slower on the slow network.
\end{itemize}

\vspace{0.3em}
\noindent {\bf Top-$k$ selection}
\textsf{SANNS} chooses $k$ smallest distances out of~$n$ by garbling a new top-$k$ circuit that we develop in this work.
The circuit is built from $O(n + k^2)$ comparators.
Alternative approaches in the prior work are:
\begin{itemize}[leftmargin=0mm]
\item The naive circuit of size $\Theta(nk)$ (c.f. Algorithm~\ref{alg: naivetopk})
was used for $k$-NNS in~\cite{asharov2018privacy,songhori2015compacting,schoppmann2018private}.
This gives asymptotically a factor of~$k$ slow-down, which is significant
even for $k = 10$ (which is a~typical setting used in practice).
\item Using homomorphic encryption (HE) for the top-$k$ selection. In the works~\cite{SFR18,shaul2018scalable}, to select $k$ smallest distances,
the BGV scheme is used, which is a variant of the BFV scheme we use for distance computations.
Neither of the two schemes are suitable for the top-$k$ selection, which is a highly non-linear operation.
A more suitable HE scheme for this task would have been TFHE~\cite{chillotti2016faster}, however, it is still known to be slower than the garbled circuit approach by at least three orders of magnitude.
\end{itemize}

We can conclude the discussion as follows: our experiments show that for $k = 10$, even the linear scan version of \textsf{SANNS}
is at up to $3.5\times$ faster than all the prior work \emph{even if we implement all the components in the prior work using the most modern tools} (for larger values of $k$, the gap would increase).
However, as we move from the linear scan to the sublinear algorithm, this yields additional speed-up up to $12\times$ 
at a cost of introducing small error in the output (on average, one out of ten reported nearest neighbors is incorrect).

All the prior work described above is in the semi-honest model except~\cite{schoppmann2018private} (which provides malicious security).
The drawback, however, is efficiency: the algorithm from~\cite{schoppmann2018private} can process one query for a dataset of size $50\,000$
in several hours. Our work yields an algorithm that can handle $10$ million data points in a matter of seconds.
All the other prior work deals with datasets of size at most $10\,000$. Thus, by designing
better algorithms and by carefully implementing and optimizing them, we scale up the datasets one can handle
efficiently by \emph{more than two orders of magnitude.}

\vspace{0.3em}
\noindent {\bf Other security models}
Some prior work considered the secure $k$-NNS problem in settings different from ``vanilla'' secure two-party computation.
Two examples are the works~\cite{riazi2016sub,wu2019privacy}. The work \cite{wu2019privacy} is under the two-server setting, which is known to give much more efficient protocols, but the security relies on the assumption that the servers do not collude.
At the same time, our techniques (e.g., better top-$k$ circuit and the balanced clustering algorithm) should yield improvements for the two-server setting as well.
In the work~\cite{riazi2016sub}, a very efficient sublinear-time protocol for secure approximate $k$-NNS is provided that provides a trade-off between privacy and the search quality. One can tune the privacy parameter to limit the information leakage based on the desired accuracy threshold. As a result, their protocol can leak more than approximate $k$-NNS results, i.e., one can estimate the similarity of two data points based on the hash values (see Section 5 of~\cite{riazi2016sub} for a~formal bound on the information leakage).

\subsection{Applications of Secure $k$-NNS} 
\textsf{SANNS} can potentially impact several real-world applications.
At a high-level, our system can provide a an efficient mechanism to retrieve similar elements to a query in any two-party computation model, e.g., database search, recommender systems, medical data analysis, etc. that provably does not leak anything beyond (approximate) answers.
For example, our system can be used to retrieve similar images within a~database given a query. 
We analyze the efficiency of our system in this scenario using the SIFT dataset which is a standard benchmark in approximate nearest-neighbor search~\cite{lowe1999object}.  
Additionally, we consider Deep1B which is a dataset of image descriptors~\cite{babenko2016efficient}. We run \textsf{SANNS} on a database as big as \emph{ten million} images, whereas the prior work deals with datasets of size at most $50\,000$. 
As another application of secure $k$-NNS consider privacy-preserving text search, which has been rigorously studied in the past~\cite{sun2013privacy,cao2013privacy,manojprivacy,pang2010privacy,gopal2012secure}. 
One group of these solutions support (multi)-keyword search~\cite{sun2013privacy,cao2013privacy,manojprivacy}: a client can receive a set of documents which include all (or subset of) keywords queried by the clients. 
In a more powerful setting, text {\it similarity} search can be performed where all documents that are semantically similar to a given document can be identified while keeping the query and the database private~\cite{pang2010privacy,gopal2012secure}. 
In this context, we evaluate \textsf{SANNS} on the Amazon reviews text database~\cite{amazon2015}.

 \section{Preliminaries}

\subsection{Secret Sharing}

In this work, we use a combination of secure computation primitives to solve the $k$-NNS problem. We connect these primitives via secret sharing, which comes in two forms: an~{\it arithmetic} secret sharing of a value $x \in \bZ_t$ is a pair $(\cshare{x}, \sshare{x})$ of random values subject to $\cshare{x} + \sshare{x} \equiv x \mod{t}$, whereas a {\it Boolean} (or XOR) secret sharing of $x \in \{0,1\}^\tau$ is a pair of random strings subject to $\cshare{x} \oplus \sshare{x}= x$. 

\label{sec: prelim}
\subsection{Distributed Oblivious RAM (DORAM)}
\label{oram_brief}

Previous solutions for secure $k$-NNS require computing distance between the query point and all points in the database, which is undesirable for large databases. In order to avoid this linear cost, we utilize a distributed version of oblivious RAM (DORAM). In this scenario, two parties hold secret shares of an array, and they can perform oblivious read and write operations, with \emph{secret-shared} indices. Typically one requires the communication cost to be sublinear in the array size. There are many known DORAM constructions~\cite{wang2014scoram,wang2015circuit,zahur2016revisiting,doerner2017scaling}, among which we choose Floram~\cite{doerner2017scaling} for efficiency reasons.  
In this work, we use Floram in {\it read-only} mode, and we further enhance its performance through careful optimizations. At a high level, we implement and use two subroutines:
\begin{itemize}[leftmargin=0mm]
    \item$\mathsf{DORAM.Init}(1^{\lambda}, DB) \to (k_A, k_B, \overline{DB})$. This step creates a masked version of the database ($\overline{DB}$) from the plaintext version ($DB$) and outputs two secret keys $k_A$ and $k_B$, one to each party. Here $\lambda$ is a security parameter.
    \item$\mathsf{DORAM.Read}(\overline{DB}, k_A, k_B, i_A, i_B) \to~(DB[i]_A,DB[i]_B)$. \\ 
    This subroutine performs the read operation where address $i$ is secret-shared between two parties as $i_A \oplus i_B = i$. Both parties acquire a XOR-share of $DB[i]$. 
\end{itemize}
In Section~\ref{ssec:oram}, we describe
these subroutines and various optimizations
in a greater detail.

\subsection{Additive Homomorphic Encryption (AHE)} 
A (private-key) additive homomorphic encryption (AHE) scheme is private-key encryption scheme with three additional algorithms $\mathsf{Add}, \mathsf{CAdd}$ and $\mathsf{CMult}$, which supports adding two ciphertexts, and addition / multiplication by constants. We require our AHE scheme to satisfy standard IND-CPA security and {\it circuit privacy}, which means that a ciphertext generated from $\mathsf{Add}$, $\mathsf{CAdd}$ and $\mathsf{CMult}$ operations should not leak more information about the
operations to the secret key owner, other than the decrypted message. This is required since in our case the server will input its secret values into $\mathsf{CAdd}$ and $\mathsf{CMult}$. We chose to use the BFV scheme~\cite{FV12}, and we achieve circuit privacy through noise flooding \cite{gazelle}.

\subsection{Garbled Circuit (GC)}

Garbled circuit (GC) is a technique first proposed by Yao in~\cite{Yao86} for achieving generic secure two-party computation for arbitrary Boolean circuits. Many improvements to GC have been proposed in literature, such as free-XOR~\cite{KS08} and half-gates ~\cite{ZRE15}. 
In addition, we use the fixed-key block cipher optimization for garbling and evaluation~\cite{bellare2013efficient}. Using Advanced Encryption Standard (AES) as the block cipher, we leverage Intel AES instructions for faster garbling procedure.

\subsection{$k$-means Clustering}
\label{kmeans_section}
One of our algorithms uses the $k$-means clustering algorithm~\cite{lloyd1982least} as a subroutine. It is a simple heuristic, which finds a clustering $X = C_1 \cup C_2 \cup \ldots \cup C_k$ into disjoint subsets $C_i \subseteq X$, and centers $\pt{c}{1}, \pt{c}{2}, \ldots, \pt{c}{k}\in \Rbb^d$, which approximately minimizes the objective function
$
\sum_{i=1}^k \sum_{{\bf x} \in C_i} \|\pt{c}{i} - {\bf x}\|^2.
$

 \section{Plaintext $k$-NNS Algorithms}
\label{sec: plaintext alg}

\vspace{0.3em}
\noindent {\bf Optimized linear scan} Our first algorithm is
a heavily optimized implementation of the linear scan:
we compute distances from the query point to \emph{all} the data points, and then (approximately) select $\knn$ data points
closest to the query. At a high level, we will implement distance computation using AHE, while top-$k$ selection is done using GC.

Computing top-$k$ na\"\i vely would require a circuit built from $O(nk)$ comparators. Instead, we propose a new algorithm for an approximate selection of top-$k$, which allows for a smaller circuit size (see section \ref{sec: approx select}) and will help us later when we implement the top-$k$ selection securely using garbled circuit.

\vspace{0.3em}
\noindent {\bf Clustering-based algorithm} The second algorithm is based on the $k$-means clustering (see Section~\ref{kmeans_section}) and, unlike our first algorithm,
has \emph{sublinear} query time. We now give a simplified version of the algorithm, and in Section~\ref{algorithm_2_real} we explain why this simplified version is inadequate and provide a full description that leads to efficient implementation.

At a high level,
we first compute $k$-means clustering of the server's dataset
with $k=\kcl$ clusters. Each cluster $1 \leq i \leq \kcl$
is associated with its \emph{center}
$\pt{c}{i} \in \Rbb^d$. During the query stage,
we find $1 \leq u \leq \kcl$
centers that are closest to the query, where $u$ is a parameter to be chosen.
Then we compute $\knn$ data points from
the corresponding $u$-many centers, and return IDs of these points as a final answer.

\subsection{Approximate Top-$k$ Selection}
\label{sec: approx select}

In both of our algorithms, we rely extensively on the following \emph{top-$k$ selection} functionality which we denote by $\MIN_n^k(x_1, x_2, \ldots, x_n)$: given a list of $n$ numbers $x_1, x_2, \ldots, x_n$, output $k \leq n$ smallest list elements in the sorted order. We can also consider the augmented functionality where each value is associated with an ID, and we output the IDs together with the values of the smallest $k$ elements. We denote this augmented functionality by $\overline{\MIN}_n^k$. In the RAM model, computing $\MIN_n^k$ is a well-studied problem, and it is by now a standard fact that it can be computed in time $O(n + k \log k)$~\cite{blum1973time}. However, to perform top-$k$ selection securely, we need to implement it as a Boolean \emph{circuit}. Suppose that all the list elements are $b$-bit integers. Then the required circuit has $b n$ inputs and $b k$ outputs. To improve efficiency, it is desirable to design a circuit for $\MIN_n^k$ with as few gates as possible.

\vspace{0.3em}
\noindent {\bf The na\"\i ve construction} 
A na\"ive circuit for $\MIN_n^k$ performs $O(nk)$ comparisons and hence consists of $O(bnk)$ gates. Algorithm~\ref{alg: naivetopk} gives such a circuit (to be precise, it computes the augmented functionality $\overline{\MIN}_n^k$, but can be easily changed to compute only ${\MIN}_n^k$). Roughly, it keeps a sorted array of the current $k$ minima. For every $x_i$, it uses a ``for'' loop to insert $x_i$ into its correct location in the array, and discards the largest item to keep it of size $k$.

\vspace{0.3em}
\noindent {\bf Sorting networks} Another approach is to employ sorting networks (e.g., AKS~\cite{ajtai19830} or the Zig-Zag sort~\cite{goodrich2014zig}) with $O(bn \log n)$ gates, which can be further improved
to $O(bn \log k)$. However, these constructions are not known to be practical.

\vspace{0.3em}
\noindent {\bf Approximate randomized selection} We are not aware of any circuit for $\MIN_n^k$ with $O(bn)$ gates unless $k$ is a constant ($O(bn)$ gates is optimal since the input has $bn$ bits). Instead, we propose a \emph{randomized} construction of a circuit with $O(bn)$ gates. 
We start with shuffling the inputs in a \emph{uniformly random order}. Namely,
instead of $x_1, x_2, \ldots, x_n$, we consider the list $x_{\pi(1)}, x_{\pi(2)}, \ldots, x_{\pi(n)}$, where $\pi$ is a uniformly random permutation of $\{1, 2, \ldots, n\}$. We require the output to be ``approximately correct'' (more on the precise definitions later) with high probability over $\pi$ for every particular list $x_1, x_2, \ldots, x_n$. 

We proceed by partitioning the input list into $l \leq n$ bins of size $n / l$ as follows:
$U_1 = \{x_{\pi(1)}, \ldots, x_{\pi(n/l)}\}$,
$U_2 = \{x_{\pi(n / l + 1)}, \ldots, x_{\pi(2 n / l)}\}$,
\ldots,
$U_l = \{x_{\pi((l - 1) n / l + 1)}, \ldots, x_{\pi(n)}\}$.
Our circuit works in two stages: first, we compute the minimum within each bin $M_i = \min_{x \in U_i} x$, then we output $\MIN_l^k(M_1, M_2, \ldots, M_l)$ as a final result using the na\"\i ve circuit for $\MIN_l^k$. The circuit size is $O(b \cdot (n + kl))$, which is
$O(bn)$ whenever $kl = O(n)$.

Intuitively, if we set the number of bins $l$ to be large enough, the above circuit should output
a high-quality answer with high probability over $\pi$. 
We state and prove two theorems formalizing this intuition in two different ways.
We defer the proofs to Appendix~\ref{appendix_topk_proofs}.

\begin{algorithm}[t]
\begin{algorithmic}
\caption{Naive Top-$k$ Computation} \label{alg: naivetopk}
\Function{NaiveTopk}{$(x_1, \ID_1), \ldots, (x_n, \ID_n)$, $k$}
\State $OPT = [\mathsf{MAXVAL}]_k$
\State $idlist = [0]_k$
\For{$i \gets 1 \ldots n$} 
	\State $x \gets x_i$, $idx \gets \ID_i$ 
	\For{$j \gets 1 \ldots k$} 
		\State $b \gets (x < OPT[j])$
		\State $(OPT[j], x)  = \Call{Mux}{OPT[j], x, b}$
		\State $(idlist[j], idx)  =\Call{Mux}{idlist[j], idx, b}$
\EndFor
\EndFor
\State \Return $(OPT, idlist)$
\EndFunction
\Function{Mux}{$a_1, a_2, b$}
\State \# Returns $(a_1, a_2)$ for $b = 0$, and $(a_2, a_1)$ for $b=1$ 
\State	\Return $(a_1 + (a_2 - a_1) \cdot b, a_2 + (a_1 - a_2) \cdot b)$
\EndFunction
\end{algorithmic}
\end{algorithm}

\begin{algorithm}[t]
\begin{algorithmic}
\caption{Approximate top-$k$ selection} \label{alg: approxtopk}
\Function{ApproxTopk}{$(x_1, \ID_1), \ldots, (x_n, \ID_n)$, $k$, $l$}
\For{$i \gets 1 \ldots l$} 
\State $(M_i, \widetilde{\ID}_i) \gets$
\State \qquad $\gets \Call{MIN}{\{(x_{(i\cdot n / l + j)}, \ID_{(i\cdot n / l + j)})\}_{j=1}^{n/l}}$ 
\EndFor
\State \Return \Call{NaiveTopk}{$(M_1, \widetilde{\ID}_1), \ldots, (M_l, \widetilde{\ID}_l)$, $k$}
\EndFunction
\end{algorithmic}
\end{algorithm}

\begin{algorithm}[t]
\begin{algorithmic}
\caption{Plaintext linear scan} \label{alg: linearscan}
\Function{LinearScanKNNS}{${\bf q}, \{{\bf p}_i \}_{i=1}^n$, $\ID$}
\State \# Uses hyperparameters $\rpoints$, $\knn$, $\lstash$ from Figure~\ref{fig:hyperparam}
\State Randomly permute the set  $\{ {\bf p}_i \}$
\For{$i \gets 1, \ldots, n$}
\tikzmkone{A}
\State $d_i \gets \|\pt{q} - \pt{p}{i}\|^2$
\tikzmkone{B}\boxitone{he_color}
\tikzmkonetwo{A}
\State $d_i \gets \lfloor \frac{d_i}{2^{\rpoints}} \rfloor$
\tikzmkonetwo{B}\boxitonetwo{gc_color}
\EndFor
\tikzmktwo{A}
\State
$(v_1, \ID_1), \ldots, (v_{\knn}, \ID_{\knn}) \gets$
\State $\,\,\Call{ApproxTopk}{d_1, \ID({\bf p}_1), \ldots, (d_n, \ID({\bf p}_n), \knn, \lstash}$
\tikzmktwo{B}\boxittwo{gc_color}
\State \Return{$\ID_1, \ldots, \ID_{\knn}$}
\EndFunction
\end{algorithmic}
\end{algorithm}

\begin{theorem}
\label{topk_expectation}
    Suppose the input list $(x_1, \ldots, x_n)$ is in uniformly random order.  There exists $\delta_0 > 0$
    and a positive function $k_0(\delta)$
    with the following property.
    For every $n$, $0 < \delta < \delta_0$,
    and $k \geq k_0(\delta)$,
    one can set the number of bins $l = k / \delta$
    such that the intersection $\mathcal{I}$ of the output of Algorithm~\ref{alg: approxtopk}
        with
    $\overline{\MIN}_n^k(x_1, x_2, \ldots, x_n)$ contains at least $(1 - \delta) k$ entries in expectation over the choice of $\pi$.
\end{theorem}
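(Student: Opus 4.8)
The plan is to analyze, via linearity of expectation, the probability that each individual true top-$k$ element survives into the output, so the first move is to fix the random balanced assignment of the $n$ inputs to the $l$ bins and break ties in the values by a fixed rule (e.g.\ by ID) to make the order strict. Let $x_{(r)}$ denote the $r$-th smallest input overall ($1 \le r \le k$), so that $\{x_{(1)},\dots,x_{(k)}\}$ is exactly $\overline{\MIN}_n^k$. The key preliminary step is to characterize precisely when $x_{(r)}$ lands in the output set $\mathcal{I}$: I claim $x_{(r)}\in\mathcal{I}$ iff $x_{(r)}$ is the minimum of its own bin. The ``only if'' direction is immediate, since the output is a subset of the $l$ bin minima, so a non-minimum cannot be reported. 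For ``if'', suppose $x_{(r)}$ wins its bin; any bin minimum strictly smaller than $x_{(r)}$ is itself an input strictly smaller than $x_{(r)}$, and there are only $r-1\le k-1$ such inputs, with distinct bins contributing distinct minima. Hence at most $r-1$ bin minima precede $x_{(r)}$, so it ranks among the $k$ smallest bin minima and is selected by the naive $\MIN_l^k$ stage.

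Given this equivalence, $x_{(r)}$ is recovered exactly when none of the $r-1$ inputs smaller than it shares its bin. Writing $m=n/l$ for the bin size and conditioning on the bin of $x_{(r)}$ (which has $m-1$ free slots among the remaining $n-1$), the $r-1$ smaller inputs occupy uniformly random distinct slots, so
\[
\Pr[x_{(r)}\in\mathcal{I}]=\frac{\binom{n-m}{\,r-1\,}}{\binom{n-1}{\,r-1\,}}=\prod_{i=0}^{r-2}\frac{n-m-i}{n-1-i}.
\]
A one-line cross-multiplication shows $\frac{n-m-i}{n-1-i}\ge\frac{n-m}{n}=1-\tfrac1l$ precisely when $i\le l-1$, which holds throughout since $i\le r-2\le k-2<l-1$ (as $l=k/\delta>k$); this is the place where the exact hypergeometric probability dominates the cruder i.i.d.\ approximation $(1-1/l)^{r-1}$, which is what keeps the final bound clean. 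Thus $\Pr[x_{(r)}\in\mathcal{I}]\ge(1-1/l)^{r-1}$, and by linearity of expectation together with the geometric sum,
\[
\mathbb{E}\big[|\mathcal{I}|\big]=\sum_{r=1}^{k}\Pr[x_{(r)}\in\mathcal{I}]\;\ge\;\sum_{r=1}^{k}\Big(1-\tfrac1l\Big)^{r-1}=l\Big(1-\big(1-\tfrac1l\big)^{k}\Big).
\]

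The last step is to substitute $l=k/\delta$ and invoke the elementary estimates $(1-\delta/k)^k\le e^{-\delta}$ and $e^{-\delta}\le 1-\delta+\delta^2/2$ to conclude
\[
\mathbb{E}\big[|\mathcal{I}|\big]\;\ge\;\frac{k}{\delta}\Big(1-\big(1-\tfrac{\delta}{k}\big)^{k}\Big)\;\ge\;\frac{k}{\delta}\Big(\delta-\tfrac{\delta^2}{2}\Big)=k\Big(1-\tfrac{\delta}{2}\Big)\;\ge\;(1-\delta)k,
\]
which is the claimed bound, with slack to spare. In this analysis the hypotheses $\delta<\delta_0$ and $k\ge k_0(\delta)$ in the statement serve only to keep $l=k/\delta$ an integer dividing $n$ and to absorb the rounding from non-integral bin sizes (one also needs $n\ge l$, which is the regime of interest); the core inequality holds for all admissible parameters. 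I expect the main obstacle to be not the probability estimate but nailing the two-sided recovery characterization above — in particular the ``if'' direction, that a top-$k$ element winning its bin is \emph{guaranteed} a place in the final $\MIN_l^k$ because fewer than $k$ distinct bins can produce a strictly smaller minimum — since everything downstream (the clean geometric sum and the $1-\delta/2$ estimate) follows routinely once that equivalence and the balanced-bins sampling are handled correctly.
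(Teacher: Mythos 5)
Your proof is correct, and it takes a genuinely different route from the paper's. The paper computes $\mathrm{E}[|\mathcal{I}|]$ by linearity over \emph{bins} (counting bins that contain at least one top-$k$ element), and does so first in an idealized model where each element is assigned to a bin independently and uniformly; it then invokes the finite-exchangeability result of Diaconis and Freedman~\cite{diaconis1980finite} to argue that passing to the true shuffle-and-partition model perturbs each bin probability by only $O(1/l)$, hence the expectation by $O(1)$, and finally absorbs this additive error by taking $k \geq k_0(\delta)$. You instead apply linearity over the \emph{top-$k$ elements}, make explicit the characterization that $x_{(r)}$ survives iff it is its bin's minimum (the paper leaves the analogous bin-counting identity as ``not hard to see''), and compute the exact hypergeometric survival probability in the true balanced-partition model, showing factor-by-factor that it dominates $(1-1/l)^{r-1}$; summing the geometric series recovers the same quantity $l\bigl(1-(1-1/l)^k\bigr)$, now as an exact lower bound with no transfer argument and no error terms. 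What each approach buys: the paper's is shorter modulo the cited black box, but is inherently asymptotic, with $k_0(\delta)$ doing real work to absorb the $O(1)$ slack; yours is self-contained and elementary, yields the non-asymptotic bound $\mathrm{E}[|\mathcal{I}|] \geq k(1-\delta/2)$ so that $\delta_0$ and $k_0$ are needed only for integrality and rounding of $l$ and $n/l$, and it establishes directly that the balanced partition is at least as favorable as the i.i.d.\ assignment --- the same monotonicity phenomenon the paper proves by hand in its argument for Theorem~\ref{topk_whp}. One point worth keeping from your write-up is the explicit tie-breaking convention, which makes the two-sided characterization (and hence the set $\mathcal{I}$ itself) unambiguous.
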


This bound yields a circuit of size $O(b \cdot (n + k^2 / \delta))$.

\begin{theorem}
\label{topk_whp}
    Suppose the input list $(x_1, \ldots, x_n)$ is in uniformly random order. 
    There exists $\delta_0 > 0$
    and a positive function $k_0(\delta)$
    with the following property.
    For every $n$, $0 < \delta < \delta_0$,
    and $k \geq k_0(\delta)$,
    one can set the number of bins $l = k^2 / \delta$
    such that
    the output of Algorithm~\ref{alg: approxtopk}
    is \emph{exactly} $\overline{\MIN}_n^k(x_1, x_2, \ldots, x_n)$ with probability at least $1 - \delta$ over the choice of $\pi$.
\end{theorem}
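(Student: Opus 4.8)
The plan is to characterize \emph{exactly} when Algorithm~\ref{alg: approxtopk} returns the correct answer and then reduce the whole statement to a birthday-type collision bound.

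First I would reformulate the event of interest. Assume the values $x_1,\dots,x_n$ are distinct (ties can be broken by index). Let $S$ be the set of the true $k$ smallest elements, i.e.\ the output of $\overline{\MIN}_n^k(x_1,\dots,x_n)$. I claim that Algorithm~\ref{alg: approxtopk} returns $\overline{\MIN}_n^k(x_1,\dots,x_n)$ \emph{if and only if} the $k$ elements of $S$ land in $k$ distinct bins under the partition induced by $\pi$. For the ``if'' direction: when the $k$ smallest elements sit in distinct bins, each is the minimum of its own bin (being globally smaller than any competitor), so all $k$ appear among the bin-minima $M_1,\dots,M_l$; every other $M_i$ is the minimum of a bin consisting solely of elements ranked below $k$, hence strictly larger than the $k$-th smallest value. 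Therefore the $k$ smallest of $M_1,\dots,M_l$ are exactly the elements of $S$, and the final $\overline{\MIN}_l^k$ returns them. For the ``only if'' direction: if two elements of $S$ fall into the same bin, only the smaller survives as that bin's minimum, so the larger never appears among the $M_i$ and cannot be in the output, which must then include some element ranked below $k$ to make up the count, making it incorrect.

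Given this equivalence, it remains to bound the probability that the $k$ smallest elements collide in some bin. Since $\pi$ is uniform, the set of positions occupied by the $k$ smallest elements is a uniformly random $k$-subset of the $n$ slots, and the partition groups these slots into $l$ blocks of size $n/l$. For any fixed pair of these elements, a direct count gives that the probability they share a block is $\frac{n/l - 1}{n-1} \le \frac{1}{l}$. A union bound over all $\binom{k}{2}$ pairs then yields
\[
\Pr[\text{some collision}] \;\le\; \binom{k}{2}\cdot \frac{1}{l} \;\le\; \frac{k^2}{2l}.
\]
Setting $l = k^2/\delta$ makes the right-hand side at most $\delta/2 < \delta$, so the output is \emph{exactly} $\overline{\MIN}_n^k$ with probability at least $1-\delta$, as desired.

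The argument is essentially complete at this level of detail, so I expect the effort to lie in the bookkeeping rather than in any deep obstacle. The step needing the most care is the pairwise collision probability: because each bin has capacity $n/l$, the placements are sampled \emph{without} replacement and are not independent, so the clean $\tfrac1l$ figure must be justified by the short count above (one verifies $\frac{n/l-1}{n-1}\le\frac1l$ is equivalent to $l\ge 1$). The remaining points are boundary cases---handling $l \nmid n$ via integer bin sizes, and noting that when $l \ge n$ the statement is trivial since every slot is its own bin---and choosing the thresholds $\delta_0$ and $k_0(\delta)$ merely to guarantee $l = k^2/\delta \le n$ together with integrality. None of these requires more than elementary estimates.
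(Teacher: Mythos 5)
Your proof is correct, and it takes a genuinely different route from the paper's. The paper first passes to an idealized model in which each element is assigned to a bin independently and uniformly, computes the exact probability that the top-$k$ elements land in distinct bins via the product $\left(1-\tfrac{1}{l}\right)\cdots\left(1-\tfrac{k-1}{l}\right)$, estimates it with Taylor expansions as $e^{-\delta/2}+O(1/k)\ge 1-\delta$ for large $k$, and then transfers the bound to the real shuffle-and-partition model by a domination argument: conditioned on $t$ previous elements occupying $t$ distinct prescribed bins, the probability that the next element lands in a further prescribed bin is $\tfrac{n}{l(n-t)}\ge \tfrac1l$, so the all-distinct probability can only increase. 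You instead work directly in the without-replacement model and apply a birthday-type union bound over the $\binom{k}{2}$ pairs, using the exact pairwise collision probability $\tfrac{n/l-1}{n-1}\le\tfrac1l$; this gives $\Pr[\text{collision}]\le k^2/(2l)=\delta/2$ with no asymptotics at all. Your route buys two things: the bound is non-asymptotic, so $k_0(\delta)$ is needed only for trivial integrality concerns rather than for Taylor-series error terms to settle down; and you make explicit (in both directions) the equivalence between exact correctness of Algorithm~\ref{alg: approxtopk} and the top-$k$ elements occupying distinct bins, a fact the paper uses only implicitly. What the paper's approach buys is an asymptotically exact value $e^{-\delta/2}$ of the success probability rather than an upper bound on failure, and a two-model framework that is reused in the proof of Theorem~\ref{topk_expectation}, where the independent-binning computation genuinely simplifies the expectation calculation; quantitatively, though, both arguments give failure probability $\approx\delta/2$, so nothing is lost by your simpler analysis.
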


This yields a circuit of size $O(b \cdot (n + k^3 / \delta))$, which is worse than the previous bound, but the corresponding correctness guarantee is stronger.

\begin{figure}[t]
    \centering
    \resizebox{\columnwidth}{!}{
    \begin{tabular}{|c|c|l|}
    \hline
     & \textbf{Parameter} & \multicolumn{1}{c|}{\textbf{Description}}\\
    \Xhline{3\arrayrulewidth}

    \multirow{4}{*}{\rotatebox[origin=c]{90}{\textbf{Dataset}}}
    & $n$ & number of data points in the dataset\\
    \cline{2-3}
    & $d$ & dimensionality of the data points \\
    \cline{2-3}
    & $\knn$ & \begin{tabular}[c]{@{}l@{}} number of data points we need to return\\ as an answer \end{tabular} \\
    \Xhline{3\arrayrulewidth}

    \multirow{12}{*}{\rotatebox[origin=c]{90}{\textbf{Clustering Algorithm}}}
    & $T$ & number of \emph{groups} of clusters\\
    \cline{2-3}
    & $\kcl^i$ & \begin{tabular}[c]{@{}l@{}}total number of clusters for\\ the $i$-th group, $1 \leq i \leq T$ \end{tabular}\\
    \cline{2-3}
    & $m$ & \emph{largest} cluster size \\
    \cline{2-3}
    & $u^i$ & \begin{tabular}[c]{@{}l@{}} number of closest clusters we retrieve\\ for the $i$-th group, $1 \leq i \leq T$\end{tabular} \\
    \cline{2-3}
    & $\uall = \sum_{i=1}^T u^i$ & total number of clusters we retrieve \\
    \cline{2-3}
    & $l^i$ & \begin{tabular}[c]{@{}l@{}} is the number of bins we use to speed up \\ the selection of closest clusters for \\ the $i$-th group, $1 \leq i \leq T$ \end{tabular}\\
    \cline{2-3}
    & $\alpha$ & \begin{tabular}[c]{@{}l@{}}the allowed fraction of points in large \\ clusters during the preprocessing\end{tabular} \\
    \Xhline{3\arrayrulewidth}

    \multirow{3}{*}{\rotatebox[origin=c]{90}{\textbf{Stash}}}
    & $s$ & size of the \emph{stash}\\
    \cline{2-3}
    & $\lstash$ & \begin{tabular}[c]{@{}l@{}}number of bins we use to speed up\\ the selection of closest points for the stash \end{tabular} \\
    \Xhline{3\arrayrulewidth}

    \multirow{11}{*}{\rotatebox[origin=c]{90}{\textbf{Bitwidth}}} 
    & $\bcoord$ & \begin{tabular}[c]{@{}l@{}} number of bits necessary to encode \\ one \emph{coordinate} \end{tabular}\\
    \cline{2-3}
    & $\bdist$ & \begin{tabular}[c]{@{}l@{}}number of bits necessary to encode\\ one \emph{distance} ($\bdist = 2 \bcoord + \lceil \log_2 d\rceil$)\end{tabular} \\
    \cline{2-3}
    & $\bcid$ & \begin{tabular}[c]{@{}l@{}} number of bits necessary to encode \\ the index of a \emph{cluster} ($\bcid = \Bigl\lceil \log_2 \Bigl(\sum_{i=1}^T \kcl^i\Bigr)\Bigr\rceil$) \end{tabular}\\
    \cline{2-3}
    & $\bpid$ & \begin{tabular}[c]{@{}l@{}} number of bits for ID of a \emph{point} \end{tabular}\\
    \cline{2-3}
    & $\rcenters$ & \begin{tabular}[c]{@{}l@{}} number of bits we discard when computing \\ distances to \emph{centers of clusters},
    $0 \leq \rcenters \leq \bdist$ \end{tabular} \\
    \cline{2-3}
    & $\rpoints$ & \begin{tabular}[c]{@{}l@{}} number of bits we discard when computing \\ distances to \emph{points},
    $0 \leq \rpoints \leq \bdist$ \end{tabular}\\
    \Xhline{3\arrayrulewidth}

    \multirow{4}{*}{\rotatebox[origin=c]{90}{\textbf{AHE}}} 
    & $N$ & the ring dimension in BFV scheme \\
    \cline{2-3}
    & $q$ & ciphertext modulus in BFV scheme \\
    \cline{2-3}
    & $t = 2^{\bdist}$ & \begin{tabular}[c]{@{}l@{}} plaintext modulus in BFV scheme and \\ the modulus for secret-shared distances \end{tabular} \\
    \Xhline{3\arrayrulewidth}

    \end{tabular}}
    \caption{List of hyperparameters.}
    \label{fig:hyperparam}
\end{figure}

\subsection{Approximate Distances}
\label{approx_distances}

To speed up the top-$k$ selection further, instead of exact distances, we will be using \emph{approximate} distances. Namely,
instead of storing full $b$-bit distances, we discard $r$ low-order bits, and the overall number of gates in the selection circuit
becomes $O((b - r) \cdot (n + kl))$.
For the clustering-based algorithm, we set $r$ differently depending on whether we select closest cluster centers or closest data points,
which allows for a more fine-grained parameter tuning.

\subsection{Balanced Clustering and Stash}
\label{algorithm_2_real}

\begin{figure*}[ht]
\centering
    \includegraphics[width=0.95\textwidth]{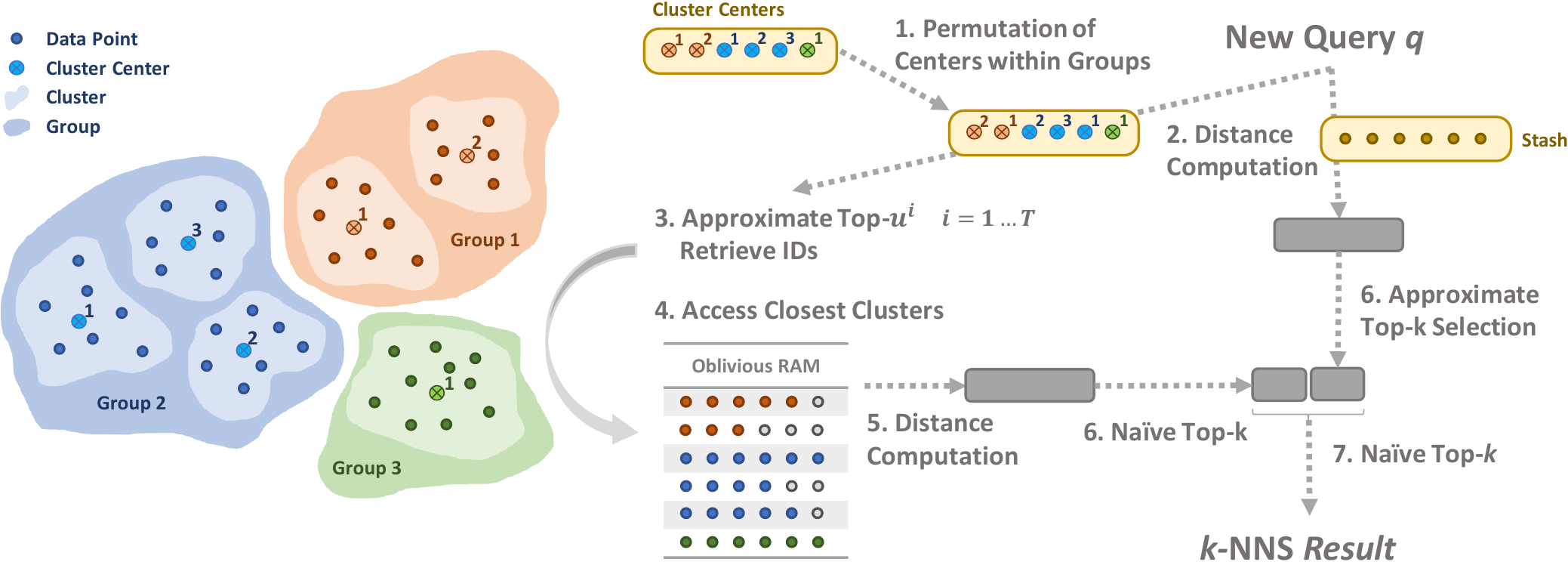}
    \caption{Visualization of \textsf{SANNS} clustering-based algorithm.}
    \label{fig:global}
\end{figure*}

To implement the clustering-based $k$-NNS algorithm securely while avoiding linear cost, we use DORAM for retrieval of clusters. In order to prevent leaking the size of each cluster, we need to set the memory block size equal to the size of the \emph{largest} cluster. This can be very inefficient, if clusters are not very balanced, i.e., the largest cluster is much larger than a \emph{typical} cluster. Unfortunately, this is exactly what
we observe in our experiments. Thus, we need a mechanism to mitigate imbalance of clusters. Below we describe one such approach, which constitutes the \emph{actual} version of the clustering-based algorithm we securely implement. With cluster balancing, our experiments achieve $3.3\times$ to $4.95\times$ reduction of maximum cluster sizes for different datasets.

We start with specifying the desired largest cluster size $1 \leq m \leq n$ and an auxiliary parameter $0 < \alpha < 1$, where $n$ denotes the total number of data points. Then,  we find the smallest $k$ (recall $k$ denotes the number of centers) such that in the clustering of the dataset $X$ found by the $k$-means algorithm at most $\alpha$-fraction of the dataset lies in clusters of size more than $m$. Then we consider all the points that belong to the said large clusters, which we denote by $X'$, setting $n' = |X'| \leq \alpha n$, and apply the same procedure recursively to $X'$. Specifically, we find the smallest $k$ such that the $k$-means clustering of $X'$ leaves at most $\alpha n'$ points in clusters of size more than $m$. We then cluster these points etc. The algorithm terminates whenever every cluster has size $\le m$.

At the end of the algorithm, we have $\widetilde{T}$ \emph{groups} of clusters
that correspond to disjoint subsets of the dataset (as a side remark, we note that one always has $\widetilde{T} \leq \log_{1/\alpha} n$). We denote the number of clusters in the $i$-th group by $\kcl^i$, the clusters themselves by $C^i_1, C^i_2, \ldots, C^i_{\kcl^i} \subseteq X$ and their centers by $c^i_1, c^i_2, \ldots, c^i_{\kcl^i} \in \Rbb^d$.
During the query stage, we find $u^i$ clusters from the $i$-th group with the centers closest to the query point, then we retrieve all the data points
from the corresponding $\sum_{i=1}^{\widetilde{T}} u^i$ clusters, and finally from these retrieved points we select $\knn$ data points that are closest to the query.

We now describe one further optimization that helps to speed up the resulting $k$-NNS algorithm even more. Namely, we collapse last several groups into a special set of points, which we call a \emph{stash}, denoted by $S \subseteq X$. In contrast to clusters from the remaining groups, to search the stash, we perform \emph{linear scan}. We denote $s = |S|$ the stash size and $T \leq \widetilde{T}$ the number of remaining groups of clusters that are not collapsed.

The motivation for introducing the stash is that the last few groups are usually pretty small, so in order for them to contribute to the overall
accuracy meaningfully, we need to retrieve most of the clusters from them. But this means many DORAM accesses which are less efficient than
the straightforward linear scan.

Note that while the simplified version of Algorithm~\ref{alg: linearscan} is well-known and very popular in practice (see, e.g.,~\cite{jegou2011product, johnson2017billion}), our modification of the algorithm in 
this section, to the best of our knowledge, is new.
It is interesting to observe that in the ``plaintext world'', clustering algorithm is far from being the best for $k$-NNS (see~\cite{aumuller2017ann} for the benchmarks), but several of its properties (namely, few non-adaptive memory accesses and that it requires computing many distances at once) make it very appealing for the secure computation.

\begin{algorithm}[h!]
\begin{algorithmic}
\caption{Plaintext clustering-based algorithm} \label{alg: clustering} \label{alg: plain clustering} 
\Function{ClusteringKNNS}{$\pt{q}{}$, $C_j^i$, $\pt{c}{j}^i$, $S$, $\ID$}
\State \# The algorithm uses hyperparameters in Figure~\ref{fig:hyperparam}
\State Randomly permute the cluster centers in each group
\State and all points in stash 
\For{$i \gets 1, \ldots, T$}
\For{$j \gets 1, \ldots, \kcl^i$}
\tikzmkthree{A}
\State $d_j^i \gets \|{\bf q} - \pt{c}{j}^i\|^2$
\tikzmkthree{B}\boxitthree{he_color}
\tikzmkthreeagain{A}
\State $d_j^i \gets \lfloor \frac{d_j^i}{2^{\rcenters}} \rfloor$
\tikzmkthreeagain{B}\boxitthreeagain{gc_color}
\EndFor
\tikzmkfour{A}
\State $(v_1, ind_1^i), \ldots, (v_{u^i}, ind_{u^i}^i) \gets$
\State \qquad $\gets \Call{ApproxTopk}{(d_1^i, 1), \ldots, (d^i_{\kcl^i}, \kcl^i), u^i, l^i}$
\tikzmkfour{B}\boxitfour{gc_color}
\EndFor
\tikzmkseven{A}
\State $C \gets \bigcup\limits_{1 \le i \le T} \bigcup\limits_{1 \le j \le u_i} C^i_{ind_j^i}$
\tikzmkseven{B}\boxitseven{oram_color}
\For{$\pt{p}{} \in C \cup S$}
\tikzmkfive{A}
\State $d_{\pt{p}{}} \gets \|\pt{q}{} - \pt{p}{} \|^2$
\tikzmkfive{B}\boxitfive{he_color}
\tikzmkfive{A}
\State $d_{\pt{p}{}} \gets \lfloor \frac{d_{\pt{p}{}}}{2^{\rpoints}} \rfloor$
\tikzmkfiveagain{B}\boxitfiveagain{gc_color}
\EndFor
\tikzmksix{A}
\State $(a_1, \widetilde{\ID}_1), \ldots, (a_{\knn}, \widetilde{\ID}_{\knn}) \gets$
\State \qquad $\gets \Call{NaiveTopk}{\{(d_{\pt{p}{}}, \ID(\pt{p}{}))\}_{\pt{p}{} \in C}, \knn}$
\State $(a_{\knn+1}, \widetilde{\ID}_{\knn+1}), \ldots, (a_{2\knn}
, \widetilde{\ID}_{2k}) \gets$
\State \qquad $\gets \Call{ApproxTopk}{\{(d_{\pt{p}{}}, \ID(\pt{p}{}))\}_{\pt{p}{} \in S}, \knn, \lstash}$
\State $(v_1, \widehat{\ID}_1), \ldots, (v_{\knn}, \widehat{\ID}_{\knn})) \gets$
\State \qquad $ \gets \Call{NaiveTopk}{(a_1, \widetilde{\ID}_1), \ldots, (a_{2\knn}, \widetilde{\ID}_{2\knn}), \knn}$
\tikzmksix{B}\boxitsix{gc_color}
\vspace{-0.08in}
\State \Return{$\widehat{\ID}_1, \ldots, \widehat{\ID}_{\knn}$}
\EndFunction
\end{algorithmic}
\end{algorithm}

\subsection{Putting It All Together}

We now give a high-level summary of our algorithms
and in the next section we provide a more detailed
description. For the linear scan, we use the approximate top-$k$ selection to return the $\knn$ IDs after computing distances between query and all points in the database.

For the clustering-based algorithm, we use approximate top-$k$ selection for retrieving $u^i$ clusters in $i$-th group for all $i \in \{1, \ldots, T\}$.
Then, we compute the closest $\knn$ points from the query to all the retrieved points using the naive top-$k$ algorithm. 
Meanwhile, we compute the approximate top-$k$ with $k = \knn$ among distances between query and the stash. Finally, we compute and output the $\knn$ closest points from the above $2\knn$ candidate points.

Note that in the clustering-based algorithm, we use exact top-$k$ selection for retrieved points and approximate selection for cluster centers and stash. The main reason is that the approximate selection requires input values to be shuffled. The corresponding permutation can be known only by the server and not by the client to ensure that there is no additional leakage when the algorithm is implemented securely. Jumping ahead to the secure protocol in the next section, the points we retrieve from the clusters will be secret-shared. Thus, performing approximate selection on retrieved points would require a secure two-party shuffling protocol, which is expensive. Therefore, we garble a \emph{na\"\i ve} circuit for exact computation of top-$k$ for the retrieved points. Figure~\ref{fig:global} visualizes \textsf{SANNS} clustering-based algorithm. 

Figure~\ref{fig:hyperparam} lists the hyperparameters used by our algorithms. See Figure~\ref{settings_both} and Figure~\ref{settings_clustering} for the values that we use for various datasets. Our plaintext algorithms are presented in Algorithm~\ref{alg: linearscan} and Algorithm~\ref{alg: plain clustering}.

 \section{Secure Protocols for $k$-NNS}
\label{sec: secure alg}

Here we describe our secure protocols for $k$-NNS. 
For the security proofs, see Appendix~\ref{sec: proof}.  The formal specifications of the protocols are given in Figure~\ref{fig: linear scan protocol} and Figure~\ref{fig: clustering protocol}.  On a high level, our secure protocols implement plaintext algorithms~\ref{alg: linearscan} and~\ref{alg: plain clustering}, which is color-coded for reader's convenience: we implemented the blue parts using AHE, yellow parts using garbled circuit, and red parts using DORAM. These primitives are connected using secret shares, and we perform share conversions (between arithmetic and Boolean) as needed.

\subsection{Ideal Functionalities for Subroutines}

Here we define three ideal functionalities $\cF_{\mathrm{TOPk}}$, $\cF_{\mathrm{aTOPk}}$, and $\cF_{\mathrm{DROM}}$ used in our protocol. We securely implement the first two using garbled circuits, and the third using Floram~\cite{doerner2017scaling}.

\begin{figure}[ht!] \footnotesize
	\framebox{\begin{minipage}{0.95\linewidth}
			Parameters: array size $m$, modulus $t$, truncation bit size $r$, output size $k$, bit-length of ID $\bpid$
			
		        Extra parameter: $\textsf{returnVal} \in \{false,true\}$ (if set to true, return secret shares of (value, ID) pairs instead of just ID. )

			\begin{itemize}
				\item On input $A_c$ and $idlist_c$ from the client, store $A_c$. 
			    \item On input $A_s$, $idlist_s$ from the server, store $A_s$ and $idlist$. 		
				\item When both inputs are received, compute $A = (A_s +A_c) \mod t = (a_1, \ldots, a_n)$ and set $a_i' = [a_i/2^r]$, $idlist = idlist_c \oplus idlist_s$. Then, let $(b,c) = \overline{\MIN}_n^k(a_1', a_2', \ldots, a_n', idlist, k)$. Sample an array $w$ of size $k$ with random entries in $\{0, 1\}^{\bpid}$, output $c \oplus w$ to the client, and $w$ to the server. If $\textsf{returnVal}$ is true, sample a random array $s$ of size $k$ in $\bZ_{2^t}$, output $b-s$ to client and $s$ to the server.
				\end{itemize}
	\end{minipage}}
	\caption{Ideal functionality $\cF_{\mathrm{TOPk}}$}
\end{figure}

\begin{figure}[ht!] \footnotesize
	\framebox{\begin{minipage}{0.95\linewidth}
			Parameters: array size $m$, modulus $t$, truncation bit size $r$,  output size $k$, bin size $l$, ID bit length $\bpid$. 
			
			Extra parameter: $\textsf{returnVal} \in \{false,true\}$ (if set to true, return (value, ID) instead of just ID. )
			\begin{itemize}
				\item On input $A_c \in \bZ_t^m$ from the client, store $A_c$. 
			    \item On input $A_s \in \bZ_t^m$ and $idlist$ from the server, store $A_s$ and $idlist$. 		
				\item When both inputs are received, compute $A = A_s +A_c \mod t = (a_1, \ldots, a_n)$. and set $a_i' = [a_i/2^r]$.
				Let $(b, c) = \mathrm{APPROXTOPK}(a_1', \ldots, a_n', idlist, k, l)$.  Sample an array $w$ of size $k$ with random entries in $\{0,1\}^{\bpid}$. Output $c \oplus w$ to the client, and $w$ to the server. If $\textsf{returnVal}$ is true, sample a random array $s$ of size $k$, output $b-s$ to client and $s$ to the server.
				\end{itemize}
	\end{minipage}}
	\caption{Ideal functionality $\cF_{\mathrm{aTOPk}}$}
\end{figure}

\begin{figure}[ht!] \footnotesize
	\framebox{\begin{minipage}{0.95\linewidth}
			Parameters: Database size $n$, bit-length of each data block $b$. 
			\begin{itemize}
				\item $\mathsf{Init}$: on input $(\mathsf{Init}, DB)$ from the server, it stores $DB$. 
			    \item $\mathsf{Read}$: on input $(\mathsf{Read}, i_c)$ and $(\mathsf{Read}, i_s)$ from both client and server, it samples a random $R \in \{0,1\}^b$. Then it outputs $DB[(i_s + i_c) \mod{n}] \oplus R$	to client and outputs $R$ to server.
			\end{itemize}
	\end{minipage}}
	\caption{Ideal functionality $\cF_{\mathrm{DROM}}$}
\end{figure}

\begin{figure}[t] \footnotesize
    \framebox{\begin{minipage}{0.95\linewidth}
            {\bf Public Parameters:} coefficient bit length $b_c$, number of items in the database $n$, dimension $d$, AHE ring dimension $N$, plain modulus $t$, ID bit length $\bpid$, bin size $l_s$. 
            
            {\bf Inputs:} client inputs query $\pt{q}{} \in \bR^d$; server inputs $n$ points and a list $idlist$ of $n$ IDs. 
            \begin{enumerate}
                \item Client calls $\mathsf{AHE.Keygen}$ to get $sk$; server randomly permutes its points. They both quantize their points into $\pt{q}{}', \pt{p}{i}' \in \bZ_{2^{b_c}}^d$.      
                \item  Client sends $c_i = \mathsf{AHE.Enc}(sk, \pt{q}{}'[i])$ for $1 \leq i \leq d$ to the server.
                \item Server sets $p_{ik} =   \pt{p}{kN+1}'[i] + \pt{p}{kN+2}'[i] x + \cdots + \pt{p}{(k+1)N}'[i] x^{N-1}$,  samples random vector ${\bf r} \in \bZ_t^n$  and computes for $1 \leq k \leq \lceil n/N \rceil$ 
                $$f_k = \sum_{i=1}^d  \mathsf{AHE.CMult}(c_i,  {\bf p}_{ik}) + {\bf r}[kN: (k+1)N]$$ . 
                \item  Server sends $f_k$ to client who decrypts them to ${\bf s} \in \bZ_t^n$. 
                \item Client sends $-2{\bf s} + ||{\bf q}'||^2 \cdot (1,1,\ldots,1)$ to $\Fatopk$, server sends $idlist$ and $(-2r_i + ||{\bf p}_i'||^2)_i$ to $\Fatopk$, with parameters $(k, l_s, false)$. They output $[{\bf id}]_c$, $[{\bf id}]_s \in \{0,1\}^{\bpid}$. Server sends $[{\bf id}]_s$ to client, who outputs ${\bf id}= [{\bf id}]_c \oplus [{\bf id}]_s$. 
                \end{enumerate}
    \end{minipage}}
    \caption{\textsf{SANNS} linear-scan protocol $\Pls$.}
    \label{fig: linear scan protocol}
\end{figure}

\begin{figure}[t] \footnotesize
    \framebox{\begin{minipage}{0.95\linewidth}
            {\bf Public Parameters}: coefficient bit length $b_c$, number of items in the database $n$, dimension $d$, AHE ring dimension $N$, plain modulus $t$.  
            
            {\bf Clustering hyperparameters}: $T$, $\kcl^i$, $m$, $u^i$, $s$, $l^i$, $\lstash$, $\bcoord$, $\rcenters$ and $\rpoints$. 
            
            {\bf Inputs}: client inputs query $\pt{q}{} \in \bR^d$; server inputs $T$ groups of clusters with each cluster of size up to $m$, and a stash $S$; server also inputs a list of $n$ IDs $idlist$, and all cluster centers ${\bf c}_j^i$. 

\begin{enumerate}
\item Client calls $\mathsf{AHE.Keygen}$ to get $sk$.
\item Client and server quantize their points and the cluster centers. 
\item 
Server sends all clusters with one block per cluster, and each point accompanied by its ID and squared norm, to $\cF_{\mathrm{DROM}}.\mathsf{Init}$, padding with dummy points if necessary to reach size $m$ for each block.
\item  The server performs two independent random shuffles on the cluster centers and stash points.
\item For each $i \in \{1, \ldots, T\}$, 
\begin{itemize}
    \item The client and server use line 3-5 in Figure~\ref{fig: linear scan protocol} to compute secret shares of the vector $(||{\bf q} -  \pt{c}{j}^i||_2^2)_j$. 
    \item Client and server send their shares to $\Fatopk$ with $k = u^i$, $l = l^i$ and \textsf{returnVal} = false, when server inputs the default $idlist = \{0,1, \ldots, k_c^i-1\}$. They obtain secret shares of indices $j_1^i, \ldots, j_{u_i}^i$. 
\end{itemize}

\item Client and server input  secret shares of all cluster indices $\{ (i, j_c^i): i \in [1,T], c \in [1, u^i] \}$ obtained in  step~5 into $\cF_{\mathrm{DROM}}.\mathsf{Read}$, to retrieve Boolean secret shares of tuples $(\pt{p}{}, \ID(\pt{p}{}), ||\pt{p}{}||^2)$ of all points in the corresponding clusters. They convert $\pt{p}{}$ and $||\pt{p}{}||^2$ to arithmetic secret shares using e.g. the B2A algorithm in \cite{demmler2015aby}. 

\item Client and server use line 3-6 in Figure~\ref{fig: linear scan protocol} to get secret shares of a distance vector for all points determined in step~6. Then, they input their shares of points and IDs to $\Ftopk$ with \textsf{returnVal} = true, and output secret shares of a list of tuples $(d_i^{Cluster}, \ID_i^{Cluster})_{i=1}^k$. 

\item For the stash $S$, client and server use line 3-6 in Figure~\ref{fig: linear scan protocol} to obtain the secret shared distance vector. Then, they input their shares (while server also inputs IDs of stash points and client input a zero array for its ID shares) to $\Fatopk$ with parameters $(k, l_s, true)$, and output shares of $(d_i^{Stash},  \ID_i^{Stash})_{i=1}^k$.

\item Each party inputs the union of shares of (point, ID) pairs obtained from steps 7-8 to $\Ftopk$ with \textsf{returnVal}=false, and outputs secret shares of $k$ IDs. Server sends its secret shares of IDs to the client, who outputs the final list of IDs.
\end{enumerate}
    \end{minipage}}
    \caption{\textsf{SANNS} clustering-based protocol $\Pcl$.} 
    \label{fig: clustering protocol}
\end{figure}

\subsection{Distance Computation via AHE}

We use the BFV scheme \cite{FV12} to compute distances. Compared to~\cite{gazelle}, which uses BFV for matrix-vector multiplications, our approach avoids expensive ciphertext \emph{rotations}. Also, we used the coefficient encoding and a plaintext space modulo a power of two instead of a prime. This allows us to later avoid a costly addition modulo $p$ inside a garbled circuit.

More precisely, SIMD for BFV requires plaintext modulus to be prime $p \equiv 1 \mod{2N}$. However, it turns out our distance computation protocol only requires multiplication between \emph{scalars} and vectors. Therefore we can drop the requirement and perform computation modulo powers of two without losing efficiency. Recall that plaintext space of the BFV scheme is $R_t := \Zbb_t[x]/(x^{N} + 1)$. The client encodes each coordinate in to a constant polynomial $f_i = {\bf q}[i]$. Assume the server points are ${\bf p}_1, \ldots, {\bf p}_N$ for simplicity. It encodes these points into $d$ plaintexts, each encoding one coordinate of all points, resulting in $g_i = \sum_j {\bf p}_{j+1}[i] x^j$. Note that $\sum_{i=1}^d f_i g_i = \sum_{j=1}^{N} \langle {\bf q}, {\bf p}_j \rangle x^{j-1}$. The client sends encryption of $f_i$. Then the server computes an encryption  $h(x) = \sum_i f_i g_i$,  masks $h(x)$ with a random polynomial and sends back to the client, so they hold secret shares of 
$\langle {\bf q}, {\bf p}_j \rangle$ modulo $t$. Then, secret shares of Euclidean distances modulo $t$ can be reconstructed via local operations. 

Note that we need to slightly modify the above routine when computing distances of points retrieved from DORAM. Since the server does not know these points in the clear, we let client and server secret share the points and their squared Euclidean norms. 

\subsection{Point Retrievals via DORAM}\label{ssec:oram}
We briefly explain the functionality of Floram and refer the reader to the original paper~\cite{doerner2017scaling} for details.

In Floram, both parties hold {\it identical} copies of the masked database. Let the plaintext database be $DB$, block at address $i$ be $DB[i]$, and the masked database be $\overline{DB}$. We set:
$$
\overline{DB}[i] = DB[i] \oplus PRF_{k_A}(i) \oplus
PRF_{k_B}(i),
$$
where $PRF$ is a pseudo-random function, $k_A$
is a secret key owned by A and $k_B$ is similarly owned by B.
At a high level, Floram's retrieval functionality consists of the two main parts: token generation using Functional Secret Sharing (FSS)~\cite{gilboa2014distributed} and data unmasking from the PRFs.
In Floram, FSS is used to securely generate two bit vectors (one for each party) $u^A$ and $u^B$ such that
individually they look random, yet $u^A_j \oplus
u^B_j = 1$ iff $j = i$, where $i$ is the
address we are retrieving.
Then, party A computes $\bigoplus_j u^A_j \cdot \overline{DB}[i]$
and, likewise, party B computes $\bigoplus_j u^B_j \cdot \overline{DB}[i]$.
The XOR of these two values is simply
$\overline{DB}[i]$.
To recover the desired value $DB[i]$, the parties use a garbled circuit to
compute the PRFs and XOR to remove
the masks.\footnote{The retrieved block can be either returned to one party, or secret-shared between the parties within the same garbled circuit}

We implemented Floram with a few optimizations described below.

\vspace{0.3em}
\noindent {\bf Precomputing OT}
To run FSS, the parties have to execute the GC protocol $\log_2 n$ times iteratively which in turn requires $\log_2 n$ set of Oblivious Transfers (OTs). Performing consecutive OTs can significantly slow down the FSS evaluation.
We use Beaver OT precomputation protocol~\cite{beaver1995precomputing} which allows to perform all necessary OTs on random values in the beginning of FSS evaluation with a very small additional communication for each GC invocation. 

\vspace{0.3em}
\noindent {\bf Kreyvium as PRF}
Floram implemented PRF using AES. While computing AES is fast in plaintext due to Intel AES instructions, it requires many gates to be evaluated within a garbled circuit. We propose a more efficient solution based on Kreyvium~\cite{Kreyvium} which requires significantly fewer number of AND gates (see Appendix~\ref{app:stream} for various related trade-offs). Evaluating Kreyvium during the initial database masking adds large overhead compared to AES. 
To mitigate the overhead, we pack multiple (512 in our case) invocations of Kreyvium and evaluate them simultaneously by using AVX-512 instructions provided by Intel CPUs. 

\vspace{0.3em}
\noindent {\bf Multi-address access}
In Floram, accessing the database at $k$ different locations requires $k \log_2 n$ number of interactions. 
In our case, these memory accesses are non-adaptive, hence we can fuse these accesses and reduce the number of rounds to $\log_2 n$ which has significant effect in practice.

\subsection{Top-$k$ Selection via Garbled Circuit}

We implement secure top-$k$ selection using garbled circuit while we made some further optimizations to improve the performance. First, we truncate distances by simply discarding some lower order bits, which allows us to reduce the circuit size significantly (see Section~\ref{approx_distances}). The second optimization comes from the implementation side. Recall that existing MPC frameworks such as ABY~\cite{demmler2015aby} require storing the entire circuit explicitly with accompanying bloated data structures. However, our top-$k$ circuit is highly structured, which allows us to work with it looking at one small part at a time. This means that the memory consumption of the garbling and the evaluation algorithms can be essentially independent of $n$, which makes them much more cache-efficient. To accomplish this, we developed our own garbled circuit implementation with most of the standard optimizations~\cite{beaver1990round,KS08,bellare2013efficient,ZRE15}\footnote{For oblivious transfer, we use libOTe~\cite{libOTe}}, which allows us to save more than an order of magnitude in both time and memory usage compared to ABY.
 \section{Implementation and Performance Results}
\label{sec: performance}

\subsection{Environment}

We perform the evaluation on two Azure \texttt{F72s\_v2} instances (with $72$ \emph{virtual} cores
equivalent to that of Intel Xeon Platinum 8168
and $144$ GB of RAM each).
We have two sets of experiments: for \emph{fast} and \emph{slow} networks.
For the former we use two instances from the ``West US 2'' availability zone (latency 0.5 ms, throughput from 500 MB/s to 7 GB/s depending on the number of simultaneous network connections),
while for the latter we run on instances hosted in ``West US 2'' and ''East US'' (latency 34 ms, throughput from 40 MB/s to 2.2 GB/s).
We use g++ 7.3.0, Ubuntu 18.04, SEAL 2.3.1~\cite{sealcrypto} and libOTe~\cite{libOTe} for the OT phase (in the single-thread mode due to unstable
behavior when run in several threads). For networking, we use ZeroMQ. We implement balanced clustering as described in Section~\ref{algorithm_2_real} using PyTorch and run it on four NVIDIA Tesla V100 GPUs. It is done once per dataset and takes several hours (with the bottleneck being the vanilla $k$-means clustering described in Section~\ref{kmeans_section}).

\subsection{Datasets}

We evaluate \textsf{SANNS} algorithms as well as baselines on four datasets: SIFT ($n=1\,000\,000$, $d=128$) is a standard dataset of image descriptors~\cite{lowe1999object} that can be used to compute similarity between images; Deep1B ($n=1\,000\,000\,000$, $d = 96$) is also a dataset of image descriptors~\cite{babenko2016efficient}, which is built from feature vectors obtained by passing images through a deep neural network
(for more details see the original paper~\cite{babenko2016efficient}), Amazon ($n = 2^{20}$, $d = 50$) is dataset of reviews~\cite{amazon2015}, where feature vectors are obtained using word embeddings. We conduct the evaluation on two subsets of Deep1B
that consist of the first $1\,000\,000$ and $10\,000\,000$ images, which we label Deep1B-1M
and Deep1B-10M, respectively. For Amazon, we take $2^{20}$ Amazon reviews of the CDs and Vinyls category, and create a vector embedding for each review by processing GloVe word embeddings \cite{glove} as in \cite{embeddings2015}. 
SIFT comes with $10\,000$ sample queries which are used for evaluation; for Deep1B-1M, Deep1B-10M
and Amazon, a sample of $10\,000$ data points from the dataset are used as queries. For all the datasets we use Euclidean distance
to measure similarity between points. Note that the Deep1B-1M and Deep1B-10M datasets are normalized to lie on the unit sphere.

Note that all four datasets have been extensively used in nearest neighbors benchmarks and information retrieval tasks. In particular, SIFT is a part of ANN Benchmarks~\cite{aumuller2017ann},
where a large array of NNS algorithms has been thoroughly evaluated. Deep1B has been
used for evaluation of NNS algorithms in, e.g., \cite{babenko2016efficient, johnson2017billion, malkov2018efficient}.
Various subsets of the Amazon dataset have been used to evaluate the accuracy and the efficiency of $k$-NN classifiers in, e.g.,
\cite{kusner2015word,dong2019scalable}.

\subsection{Parameters}

\vspace{0.3em}
\noindent {\bf Accuracy}
In our experiments, we require the algorithms to return $\knn = 10$ nearest neighbors and measure accuracy as the average portion of correctly returned points over the set of queries (``$10$-NN accuracy''). Our algorithms achieve $10$-NN accuracy at least $0.9$ ($9$ out of $10$ points are correct on average), which is a level of accuracy considered to be acceptable in practice (see, e.g.,~\cite{kulis2009kernelized,li2014two}).

\vspace{0.3em}
\noindent {\bf Quantization of coordinates}
For SIFT, coordinates of points and queries are already small integers between $0$ and $255$, so
we leave them as is. For Deep1B, the coordinates are real numbers, and we quantize them to $8$-bit integers uniformly between the minimum and the maximum values of all the coordinates. For Amazon we do the same but with $9$ bits.
For these datasets, quantization barely affects the $10$-NN accuracy compared to using the true floating point coordinates.

\vspace{0.3em}
\noindent {\bf Cluster size balancing}
As noted in Section~\ref{algorithm_2_real}, our cluster balancing algorithm achieves the crucial bound over the maximum 
cluster size needed for efficient ORAM retrieval of candidate points. In our experiments, for SIFT, Deep1B-10M, Amazon and Deep1B-1M, the balancing algorithm reduced the maximum cluster size by factors of $4.95\times$, $3.67\times$, $3.36\times$ and $3.31\times$, respectively.

\vspace{0.3em}
\noindent {\bf Parameter choices}
We initialized the BFV scheme with parameters  $N = 2^{13}$, $t = 2^{24}$ for Amazon and $t = 2^{23}$ for the other datasets, and a 180-bit modulus $q$. 
For the parameters such as standard deviation error and secret key distribution we use SEAL default values. 
These parameters allow us to use the noise flooding technique to provide 108 bits of statistical circuit privacy.\footnote{We refer the reader to \cite{gazelle} for details on the noise flooding technique} 
The LWE estimator\footnote{We used commit 3019847 from \url{https://bitbucket.org/malb/lwe-estimator}} by Albrecht et al.~\cite{estimator} suggests 141 bits of computational security.

Here is how we set the hyperparameters for our algorithms.
See Figure~\ref{fig:hyperparam} for the full list of hyperparameters, below we list the ones that affect the performance:
\begin{itemize}[leftmargin=0mm]
    \item Both algorithms depend on $n$, $d$, $\knn$, which depend on the dataset and our requirements;
    \item The linear scan depends on $\lstash$, $\bcoord$ and $\rpoints$,
    \item The clustering-based algorithm depends on $T$, $\kcl^i$, $m$, $u^i$, $s$, $l^i$, $\lstash$, $\bcoord$, $\rcenters$ and $\rpoints$,
    where $1 \leq i \leq T$.
\end{itemize}

We use the \emph{total number of AND gates} in the top-$k$ and the ORAM circuits as a proxy for both communication and running time during hyperparameter search phase (this is due to the complexity of garbling a circuit depending heavily on the number of AND gates
due to the Free-XOR optimization~\cite{KS08}). 
Moreover, for simplicity we neglect the FSS part of ORAM, since it does not affect the performance much.
Overall, we search for the hyperparameters that yield $10$-NN accuracy at least $0.9$ minimizing the total number of AND-gates.
In Figure~\ref{settings_both} and Figure~\ref{settings_clustering} of Appendix~\ref{sec:optimal_param}, we summarize the parameters we use for both algorithms
on each dataset.

\subsection{\textsf{SANNS} End-to-End Evaluation}

\vspace{0.3em}
\noindent {\bf Single-thread}
We run \textsf{SANNS} on the above mentioned four datasets using two algorithms (linear scan and clustering) over fast and slow networks
in a single-thread mode, summarizing results in Table~\ref{single_thread_table}.
We measure per-client preprocessing of Floram
separately and split the query measurements into the OT phase, distance computation, approximate top-$k$ selection and ORAM retrievals. For each of the components, we report communication and average running time for fast and slow networks. We make several observations:
\begin{itemize}[leftmargin=0mm]
  \item On all the datasets, clustering-based algorithm is much faster than linear scan: up to $12\times$ over the fast network
  and up to $8.2\times$ over the slow network.
  \item For the clustering algorithm, per-client preprocessing is very efficient. In fact, even if there is a \emph{single}
  query per client, clustering algorithm with preprocessing is faster than the linear scan.
  \item In terms of communication, distance computation part is negligible, and the bottleneck is formed by the top-$k$
  selection and ORAM (which are fairly balanced).
  \item As a result, when we move from fast to slow network, the time for distance computation stays essentially
  the same,
  while the time for top-$k$ and ORAM goes up dramatically. This makes our new circuit for approximate
  top-$k$ selection and optimizations to Floram absolutely crucial for the overall efficiency.
\end{itemize}

\vspace{0.3em}
\noindent {\bf Multi-thread}
In Table~\ref{multi_thread_table} we summarize how the performance of \textsf{SANNS}
depends on the number of threads. We only measure the query time excluding the OT phase,
since libOTe is unstable when used from several threads.
We observe that the speed-ups obtained this way are significant
(up to $8.4\times$ for the linear scan and up to $7.1\times$ for clustering), though they are far
from being linear in the number of threads. We attribute it to both of our algorithms being mostly
memory- and network-bound. Overall, the multi-thread mode yields query time under \emph{$6$ seconds} (taking
the single-threaded OT phase into account) for our biggest dataset that consists of \emph{ten million $96$-dimensional vectors.}

\begin{table*}
    \centering
    \begin{tabular}{|c|c|c|c|c|c|c|c|}
    \hline
    \multirow{2}{*}{} & \multirow{2}{*}{\textbf{Algorithm}} & \multirow{2}{*}{\makecell{\textbf{Per-client}\\\textbf{Preprocessing}}} & \multirow{2}{*}{\textbf{OT Phase}} & \multicolumn{4}{c|}{\textbf{Query}} \\
    \cline{5-8}
    &  & & & \textbf{Total} & \textbf{Distances} & \textbf{Top-$k$} & \textbf{ORAM} \\
    \hline
    \multirow{3}{*}{\footnotesize \rotatebox[origin=c]{90}{\textbf{SIFT}}} & Linear scan & None & \makecell{1.83 s / 21.6 s \\ 894 MB} & \makecell{33.3 s / 139 s \\ 4.51 GB} & \makecell{19.8 s / 25.6 s \\ 98.7 MB} & \makecell{13.5 s / 111 s\\ 4.41 GB} & None \\
    \cline{2-8}
    & Clustering & \makecell{12.6 s / 24.7 s \\ 484 MB} & \makecell{0.346 s / 4.34 s \\ 156 MB} & \makecell{8.06 s / 59.7 s \\ 1.77 GB} & \makecell{2.21 s / 3.67 s \\ 56.7 MB} & \makecell{1.96 s / 18.0 s \\ 645 MB} & \makecell{3.85 s / 38.1 s \\ 1.06 GB} \\
    \hline
    \multirow{3}{*}{\footnotesize \rotatebox[origin=c]{90}{\makecell{\textbf{Deep} \\ \textbf{1B-1M}}}} & Linear scan & None & \makecell{1.85 s / 20.6 s \\ 894 MB} & \makecell{28.4 s / 133 s \\ 4.50 GB} &\makecell{14.9 s / 20.6 s \\ 86.1 MB}& \makecell{13.5 s / 112 s\\4.41 GB} & None \\
    \cline{2-8}
     & Clustering & \makecell{11.0 s / 20.6 s \\ 407 MB} & \makecell{0.323 s / 4.09 s \\ 144 MB} & \makecell{6.95 s / 47.8 s \\ 1.58 GB} & \makecell{1.66 s / 3.13 s \\ 44.1 MB} & \makecell{1.93 s / 16.6 s \\ 620 MB} & \makecell{3.37 s / 27.9 s \\ 920 MB}\\
    \hline
    \multirow{3}{*}{\footnotesize \rotatebox[origin=c]{90}{\makecell{\textbf{Deep} \\ \textbf{1B-10M}}}} & Linear scan & None & \makecell{20.0 s / 232 s \\ 9.78 GB} & \makecell{375 s / 1490 s \\ 47.9 GB} & \makecell{202 s / 201 s \\ 518 MB} & \makecell{173 s / 1280 s \\ 47.4 GB} & None\\
    \cline{2-8}
     & Clustering & \makecell{86.0 s / 167 s \\ 3.71 GB} & \makecell{1.04 s / 13.4 s \\ 541 MB} & \makecell{30.1 s / 181 s \\ 5.53 GB} & \makecell{6.27 s / 10.2 s \\ 59.4 MB} & \makecell{7.22 s / 61.0 s \\ 2.35 GB} & \makecell{16.5 s / 107 s \\ 3.12 GB}\\
    \hline
    \multirow{3}{*}{\footnotesize \rotatebox[origin=c]{90}{\textbf{Amazon}}} & Linear scan & None & \makecell{1.99 s / 23.3 s \\ 960 MB}  &\makecell{22.9 s / 133 s \\ 4.85 GB} & \makecell{8.27 s / 14.0 s \\ 70.0 MB} & \makecell{14.6 s / 118 s \\ 4.78 GB} & None\\
    \cline{2-8}
     & Clustering & \makecell{7.27 s / 13.4 s \\ 247 MB} & \makecell{0.273 s / 3.17 s \\ 108 MB} & \makecell{4.54 s / 35.2 s \\ 1.12 GB} & \makecell{0.68 s / 2.31 s \\ 24.4 MB} & \makecell{1.64 s / 13.8 s \\ 528 MB} & \makecell{2.22 s / 18.8 s \\ 617 MB } \\
    \hline
    \end{tabular}
    \caption{Evaluation of \textsf{SANNS} in a single-thread mode. Preprocessing is done once per client, OT phase is done once per query. In each cell, timings are given for fast and slow networks, respectively.}
    \label{single_thread_table}
\end{table*}

\begin{table*}
    \centering
    \begin{tabular}{|c|c|c|c|c|c|c|c|c|c|c|}
        \hline
        & \multirow{2}{*}{\textbf{Algorithm}} & \multicolumn{8}{c|}{\textbf{Threads}} & \multirow{2}{*}{\textbf{Speed-up}}\\
        \cline{3-10}
        & & $1$ & $2$ & $4$ & $8$ & $16$ & $32$ & $64$ & $72$ & \\
        \hline
        \multirow{3}{*}{\footnotesize \rotatebox[origin=c]{90}{\makecell{\textbf{SIFT}}}} & Linear scan & \makecell{33.3 s \\ 139 s} & \makecell{23.2 s \\ 76.4 s} & \makecell{13.4 s \\ 46.9 s} & \makecell{8.04 s \\ 32.5 s} & \makecell{4.78 s \\ 25.7 s} & \makecell{4.25 s \\ 22.1 s} & \makecell{\textbf{3.96 s} \\ \textbf{20.9 s}} & \makecell{4.14 s \\ 21.3 s} & \makecell{8.4 \\ 6.7} \\
        \cline{2-11}
        & Clustering & \makecell{8.06 s \\ 59.7 s} & \makecell{4.84 s \\ 35.2 s} & \makecell{3.16 s \\ 23.6 s} & \makecell{2.18 s \\ 24.4 s} & \makecell{1.65 s \\ 20.1 s} & \makecell{1.55 s \\ 14.2 s} & \makecell{\textbf{1.44 s} \\ \textbf{11.1 s}} & \makecell{1.47 s \\ 12.1 s} & \makecell{5.6 \\ 5.4}\\
        \hline
        \multirow{3}{*}{\footnotesize \rotatebox[origin=c]{90}{\makecell{\textbf{Deep} \\ \textbf{1B-1M}}}} & Linear scan & \makecell{28.4 s \\ 133 s} & \makecell{19.9 s \\ 75.5 s} & \makecell{11.4 s \\ 44.5 s} & \makecell{7.39 s \\ 31.9 s} & \makecell{4.53 s \\ 24.5 s} & \makecell{\textbf{3.94 s} \\ 22.0 s} & \makecell{\textbf{3.94 s} \\ 22.5 s} & \makecell{4.05 s \\ \textbf{21.1 s}} & \makecell{7.2\\6.3}\\
        \cline{2-11}
        & Clustering & \makecell{6.95 s \\ 47.8 s} & \makecell{4.20 s \\ 28.5 s} & \makecell{2.62 s \\ 22.0 s} & \makecell{2.03 s \\ 23.0 s} & \makecell{1.52 s \\ 18.4 s} & \makecell{1.43 s \\ 14.7 s} & \makecell{\textbf{1.37 s} \\ \textbf{11.0 s}} & \makecell{1.39 s \\ 11.5 s} & \makecell{5.1\\4.3}\\
        \hline
        \multirow{3}{*}{\footnotesize \rotatebox[origin=c]{90}{\makecell{\textbf{Deep} \\ \textbf{1B-10M}}}} & Linear scan & \makecell{375 s \\ 1490 s} & \makecell{234 s \\ 800 s} & \makecell{118 s \\ 480 s} & \makecell{81.8 s \\ 343 s} & \makecell{65.8 s \\ 266 s} & \makecell{55.0 s \\ 231 s} & \makecell{\textbf{53.1 s} \\ \textbf{214 s}} & \makecell{58.5 s* \\ 216 s*} & \makecell{7.1\\7.0}\\
        \cline{2-11}
        & Clustering & \makecell{30.1 s \\ 181 s} & \makecell{18.0 s \\ 97.5 s} & \makecell{10.8 s \\ 60.0 s} & \makecell{7.21 s \\ 54.5 s} & \makecell{4.85 s \\ 48.1 s} & \makecell{4.58 s \\ 37.2 s} & \makecell{\textbf{4.23 s} \\ 30.3 s} & \makecell{4.25 s \\ \textbf{28.4 s}} & \makecell{7.1\\6.4}\\
        \hline
        \multirow{3}{*}{\footnotesize \rotatebox[origin=c]{90}{\makecell{\textbf{Amazon}}}} & Linear scan & \makecell{22.9 s \\ 133 s} & \makecell{15.4 s \\ 73.1 s} & \makecell{10.1 s \\ 46.1 s} & \makecell{6.66 s \\ 33.8 s} & \makecell{4.14 s \\ 26.2 s} & \makecell{3.73 s \\ 24.1 s} & \makecell{3.78 s \\ 22.0 s} & \makecell{\textbf{3.64 s} \\ \textbf{21.7 s}} & \makecell{6.3\\6.1}\\
        \cline{2-11}
        & Clustering & \makecell{4.54 s \\ 35.2 s} & \makecell{2.66 s \\ 21.4 s} & \makecell{1.87 s \\ 14.9 s} & \makecell{1.40 s \\ 16.8 s} & \makecell{1.17 s \\ 14.2 s} & \makecell{1.15 s \\ 11.5 s} & \makecell{\textbf{1.12 s} \\ 10.8 s} & \makecell{1.16 s \\ \textbf{9.19 s}} & \makecell{4.1\\3.8}\\
        \hline
    \end{tabular}
    \caption{Evaluation of \textsf{SANNS} query algorithms in the multi-thread mode.
    Each cell contains timings for fast and slow networks. Optimal settings are
    marked in bold. For the numbers marked with an asterisk, we take the \emph{median} of the running times over several runs,
    since with small probability (approximately $20-30\%$), memory swapping starts due to exhaustion of all the available RAM, which affects the running times dramatically (by a factor of $\approx 2\times$).}
    \label{multi_thread_table}
\end{table*}

\begin{table*}
	\centering
	\begin{tabular}{|c|c|c|c|c|}
		\hline
		& \textbf{SIFT} & \textbf{Deep1B-1M} & \textbf{Deep1B-10M} & \textbf{Amazon} \\
		\hline
		\textbf{AHE} & \makecell{19.8 s / 25.6 s \\ 98.7 MB} & \makecell{14.9 s / 20.6 s \\ 56.7 MB}  & \makecell{202 s / 201 s \\ 518 MB}  & \makecell{8.27 s / 14.0 s \\ 70 MB}  \\
		\hline
		{\textbf{OT-based (lower bound)}} & \makecell{19.1 s / 181 s \\ 8.83 GB}  & \makecell{14.5 s / 153 s \\ 6.62 GB}  & \makecell{204 s / 1510 s \\ 66.2 GB}  & \makecell{8.59 s / 88.7 s \\ 3.93 GB}  \\
		\hline
	\end{tabular}
	\caption{Comparison of AHE- and OT-based approach for computing distances. Each cell has two timings: for the fast and the slow networks.}
        \label{ahe_mpc_table}
\end{table*}

\subsection{Microbenchmarks}

As we discussed in the Introduction, all the prior work that has security guarantees similar to \textsf{SANNS}
implements linear scan. 
Thus, in order to provide a detailed comparison,
we compare our approaches in terms of distance computation and top-$k$ against the ones used in the prior work.

\vspace{0.3em}
\noindent {\bf Top-$k$ selection}
We evaluate the new protocol for the approximate top-$k$ selection via garbling the circuit
designed in Section~\ref{sec: approx select}
and compare it with the na\"ive circuit obtained by a direct implementation of Algorithm~\ref{alg: naivetopk}. 
The latter was used in some of the prior work on the secure $k$-NNS~\cite{asharov2018privacy,songhori2015compacting,schoppmann2018private}.
We assume the parties start with arithmetic secret shares of $n = 1\,000\,000$ 24-bit integers. We evaluate both of the above approaches for $k \in \{1, 5, 10, 20, 50, 100\}$.
For the approximate selection, we set the number of bins $l$ such that
on average we return $(1 - \delta) \cdot k$ entries correctly for $\delta \in \{0.01, 0.02, 0.05, 0.1\}$, using the formula from
the proof
of Theorem~\ref{topk_expectation}.
For each setting, we report average running time over slow and fast networks as well as the total communication. 
Table~\ref{table_topk} summarizes our experiments.
As expected, the performance of the approximate circuit is essentially independent of $k$, whereas
the performance of the na\"ive circuit scales linearly as $k$ increases.
Even if we allow the error of only $\delta = 0.01$ (which for $k = 100$ means we return a \emph{single} wrong number),
the performance improves by a factor up to $25$ on the fast network and up to $37$ on the slow network.

The works \cite{SFR18,shaul2018scalable} used fully-homomorphic encryption (FHE) for the top-$k$ selection. 
However, even if we use TFHE~\cite{chillotti2016faster}, which is by far the most efficient FHE approach for highly-nonlinear operations, it will still be several orders of magnitude slower than garbled circuits, since TFHE requires several milliseconds per gate, whereas GC requires less than a microsecond. 
\begin{table*}
\centering
\begin{tabular}{|c|c|c|c|c|c|c|}
\hline
\multirow{2}{*}{\textbf{$k$}} & \multirow{2}{*}{\textbf{Exact}} & \multicolumn{4}{c|}{\textbf{Approximate}} & \multirow{2}{*}{\textbf{Speed-up}} \\
\cline{3-6}
    &       & $\delta = 0.01$ & $\delta = 0.02$ & $\delta = 0.05$ & $\delta = 0.1$ &  \\
\hline
\textbf{1} & \makecell{11.1 s / 93.9 s \\ 3.48 GB} & N/A & N/A & N/A & N/A & N/A \\
\hline
\textbf{5} & \makecell{22.4 s / 249 s \\ 9.62 GB} & \makecell{10.5 s / 90.6 s \\ 3.48 GB} & \makecell{10.6 s / 88.8 s \\ 3.48 GB} & \makecell{10.5 s / 94.5 s \\ 3.48 GB} & \makecell{10.7 s / 90.6 s \\ 3.48 GB} & \makecell{2.1 / 2.7} \\
\hline
\textbf{10} & \makecell{36.1 s / 448 s \\ 17.3 GB} & \makecell{10.7 s / 86.9 s \\ 3.48 GB} & \makecell{10.6 s / 91.2 s \\ 3.48 GB} & \makecell{11.0 s / 89.6 s \\ 3.48 GB} & \makecell{11.0 s / 91.3 s \\ 3.48 GB} & \makecell{3.4 / 5.2} \\
\hline
\textbf{20} & \makecell{67.8 s / 821 s \\ 32.7 GB} & \makecell{10.6 s / 95.2 s \\ 3.50 GB} & \makecell{10.7 s / 94.0 s \\ 3.49 GB} & \makecell{10.8 s / 92.9 s \\ 3.48 GB} & \makecell{10.6 s / 93.8 s \\ 3.48 GB} & \makecell{6.4 / 8.6} \\
\hline
\textbf{50} & \makecell{153 s / 2100 s \\ 78.7 GB} & \makecell{11.1 s / 99.2 s \\ 3.66 GB} & \makecell{10.6 s / 97.4 s \\ 3.57 GB} & \makecell{10.5 s / 94.5 s \\ 3.51 GB} & \makecell{10.5 s / 94.1 s \\ 3.49 GB} & \makecell{14 / 21} \\
\hline
\textbf{100} & \makecell{301 s / 4130 s \\ 156 GB} & \makecell{12.0 s / 113 s \\ 4.22 GB} & \makecell{12.0 s / 98.3 s \\ 3.85 GB} & \makecell{10.8 s / 96.0 s \\ 3.62 GB} & \makecell{11.2 s / 98.6 s \\ 3.55 GB} & \makecell{25 / 37} \\
\hline
\end{tabular}
\caption{Comparison of the exact and the approximate top-$k$ selection protocols (selecting from one million values). Each cell has two timings: for the fast and the slow networks. We report the speed-ups for fast and slow networks between the approximate algorithm with error rate $\delta = 0.01$ and the exact
algorithm.}
\label{table_topk}
\end{table*}

\vspace{0.3em}
\noindent {\bf Distance Computation}
The most efficient way to compute $n$ Euclidean distances securely, besides using the BFV scheme, is arithmetic MPC~\cite{demmler2015aby} based on oblivious transfer (one other alternative used in many prior works~\cite{ppface,sadeghi2009efficient,evans2011efficient,fingercode,elmehdwi2014secure}
is Paillier AHE scheme, which is known to be much less suitable
for the task due to the absence of SIMD capabilities~\cite{gazelle}).
Let us compare BFV scheme used in \textsf{SANNS} with the OT-based distance computation from~\cite{demmler2015aby}
with an optimization from~\cite{mohassel2017secureml}. The latter allows to compute $n$ $l$-bit distances between $d$-dimensional vectors ($l=24$ for
Amazon, $l=23$ for all the other datasets),
using $ndl(l+1)/256$ OTs of $128$-bit strings. We perform those OTs using libOTe for each of our datasets
and measure time (over fast and slow networks) as well as communication.
The results are summarized in Table~\ref{ahe_mpc_table}.
As expected, the communication required by OT-based multiplication is much larger than for AHE (by a factor up to $127\times$). As a result, for the slow network, OT-based multiplication is noticeably slower, by a factor up to $7.5\times$; for the fast network, OT-based approach is no more than $4\%$ faster than AHE.

\subsection{End-to-End Comparison with Prior Work}

We have shown that individual components used by \textsf{SANNS} are extremely competitive compared to the ones proposed by the prior work. 
Here, we provide the end-to-end performance results on the largest dataset we evaluated \textsf{SANNS} on: Deep1B-10M.
For the fast network, our linear scan requires $395$ seconds per query (taking the OT phase into account),
and clustering requires $31$ seconds; for the slow network, it is $1720$ and $194$ seconds, respectively
(see Table~\ref{single_thread_table}).

One issue with a fair comparison with the prior work is that they are done before the recent MPC and HE optimizations became available. Based on the benchmarks in the previous section, one can definitively conclude that the fastest protocol from the prior work is from~\cite{demmler2015aby}.
Namely, we compute distances using OT with the optimization from~\cite{mohassel2017secureml}, and perform the top-$k$ selection using garbled circuit with the na\"ive circuit in Algorithm~\ref{alg: naivetopk}. To estimate the running time of this protocol, we use Table~\ref{ahe_mpc_table} for distances and we run a separate experiment for na\"ive top-$k$ for $n = 10^7$ and $k = 10$.
This gives us \emph{the lower bound} on the running time of $578$ seconds on the fast network and $6040$ seconds on the slow network,
and the lower bound of 240 GB on the communication.

Overall, this indicates that our linear scan obtains a speed-up of $1.46\times$ on the fast network and $3.51\times$ on the slow network. 
The clustering algorithm yields the \textbf{speed-up of $18.5\times$ on the fast network and $31.0\times$ on the slow network}.
The improvement in communication is $4.1\times$ for the linear scan and $39\times$ for the clustering algorithm.

Note that these numbers are based on the lower bounds for the runtime of prior work and several parts of the computation and communication of their end-to-end solution are not included in this comparison.
In particular, just computing distances
using the original implementation from~\cite{demmler2015aby} on SIFT dataset takes $620$ seconds in the fast network, \textit{more than $32\times$ higher compared against our assumed lower bound of $19.1$ seconds in Table~\ref{ahe_mpc_table}}.
When scaling their implementation to ten million points, the system runs out of memory (more than $144$ GB of RAM is needed). 
In conclusion, the speed-up numbers we reported reflect running the best prior algorithm using our new optimized implementation, which leads to a more fair comparison (\textsf{SANNS} speed-up is significantly higher if the \emph{original} implementations of prior works are considered).

\section{Conclusions and Future Directions}
\label{sec: conclusion}

In this work, we design new secure computation protocols for approximate $k$-Nearest Neighbors Search between a client holding a query and a server holding a database, with the Euclidean distance metric. Our solution combines several state-of-the-art cryptographic primitives such as lattice-based AHE, FSS-based distributed ORAM and garbled circuits with various optimizations. Underlying one of
our protocols is a new sublinear-time plaintext approximate $k$-NNS algorithm tailored to secure computation. Notably,
it is the first sublinear-time $k$-NNS protocol
implemented securely. Our performance results show that our solution scales well to massive datasets consisting of up to ten million points. We highlight some directions for future work: 

\begin{itemize}[leftmargin=0mm]
\item Our construction is secure in the semi-honest model, but it would be interesting to extend our protocols to protect against malicious adversaries which can deviate from the protocol.

\item One possible future direction is to implement other sublinear $k$-NNS algorithms securely, most notably Locality-Sensitive Hashing (LSH)~\cite{andoni2015practical}, which has \emph{provable} sublinear query time and is widely used in practice.

\item It is important to study to what extent $k$-NNS queries leak information about the dataset and how much approximation in the answers adds to this leakage.
For instance, the client may try to locate individual points in a dataset by asking several queries that are perturbations of each other and checking if the point of interest ends up in the answer.
For \emph{low-dimensional} datasets there are known strong recovery attacks~\cite{kornaropoulos2019data}, but for the high-dimensional case---which is the focus of this paper---the possibility of such attacks remains open. Besides attacks, an interesting research direction is how to restrict the client (in the number of $k$-NNS queries or the degree of adaptivity) so to minimize the dataset leakage.

That being said, let us state a few simple observations about additional leakage that can happen due to approximation in the results. There are two sources of approximation: approximate top-$k$ selection and clustering-based $k$-NNS algorithm. For the sake of simplicity, let us discuss the effects of these components separately. For the former, one can show that the probability that the element with rank $l > k$ is included in the output is exponentially small in $l - k$. For the latter, we can notice the following. First, we never leak more than the union of the sets of points closest to the query in the clusters whose centers are closest to the query. Second, if the dataset is clusterable (i.e., can be partitioned into clusters with pairwise distances being significantly larger than the diameters of the clusters)
and queries are close to clusters, then the clustering based $k$-NNS algorithm is exact
and there is no additional leakage due to approximation.
\end{itemize}

\section*{Acknowledgments}

We would like to thank the anonymous reviewers for their feedback and helpful comments. This work was partially done while all the authors visited Microsoft Research Redmond.

The second-named author has been supported in part by ERC Advanced Grant ERC-2015-AdG-IMPaCT, by the FWO under an Odysseus project GOH9718N and by the CyberSecurity Research Flanders with reference number VR20192203. Any opinions, findings and conclusions or recommendations expressed in this material are those of the author(s) and do not necessarily reflect the views of the ERC or FWO.

\bibliographystyle{abbrv}
\bibliography{bib}

\appendix 
\section{Chosen Hyperparameters in Clustering-Based Algorithm}
\label{sec:optimal_param}

In Table~\ref{settings_both} and Table~\ref{settings_clustering}, we summarize the parameters we use for both of our algorithms
on each of the datasets.

\begin{table}[h]
    \centering
    \begin{tabular}{|c|c|c|c|c|c|c|c|c|}
        \hline
          & \multicolumn{4}{c|}{Linear scan}&\multicolumn{4}{c|}{Clustering} \\
         \hline
        {\footnotesize \rotatebox[origin=c]{90}{Params}} &
        {\footnotesize \rotatebox[origin=c]{90}{SIFT}} &
		\makecell{{\footnotesize \rotatebox[origin=c]{90}{Deep}}{\footnotesize \rotatebox[origin=c]{90}{1B-1M}}} &
		\makebox{{\footnotesize \rotatebox[origin=c]{90}{Deep}}{\footnotesize \rotatebox[origin=c]{90}{1B-10M}}} &
		{\footnotesize \rotatebox[origin=c]{90}{Amazon}} &
		{\footnotesize \rotatebox[origin=c]{90}{SIFT}} &
		\makecell{{\footnotesize \rotatebox[origin=c]{90}{Deep}} {\footnotesize \rotatebox[origin=c]{90}{1B-1M}}} &
		\makecell{{\footnotesize \rotatebox[origin=c]{90}{Deep}}{\footnotesize \rotatebox[origin=c]{90}{1B-10M}}} &
		{\footnotesize \rotatebox[origin=c]{90}{Amazon}} \\
\hline
         $\lstash$ & 8334 & 8334 & 83 & 8739 & 262 & 210 & 423 & 84 \\
         \hline
         $\bcoord$ & 8 & 8 & 8 & 9 & 8 & 8 & 8 & 9 \\
         \hline
         $\rpoints$ & 8 & 8 & 9 & 7 & 8 & 8 & 8 & 6 \\
         \hline
    \end{tabular}
    \caption{(Near-)optimal hyperparameters that are used both by linear scan and the clustering-based algorithm.}
    \label{settings_both}
\end{table}

\begin{table}[h]
    \centering
    \begin{tabular}{|c|c|c|c|c|}
         \hline
         {\footnotesize \rotatebox[origin=c]{90}{Params}} & 
         {\footnotesize \rotatebox[origin=c]{90}{SIFT}} &
         \makecell{{\footnotesize \rotatebox[origin=c]{90}{Deep}} {\footnotesize \rotatebox[origin=c]{90}{1B-1M}}} & 
         \makecell{{\footnotesize \rotatebox[origin=c]{90}{Deep}} {\footnotesize \rotatebox[origin=c]{90}{1B-10M}}} &
         {\footnotesize \rotatebox[origin=c]{90}{Amazon}} \\
         \hline
         $T$ & 4 & 5 & 6 & 5 \\
         \hline
         $\kcl^i$ & \makecell{50810 \\ 25603 \\ 9968 \\ 4227} & \makecell{44830 \\ 25867 \\ 11795 \\ 5607  2611} & \makecell{209727 \\ 107417 \\ 39132  14424 \\ 5796  2394} &
         \makecell{41293 \\ 24143 \\ 9708 \\ 3516  1156} \\
         \hline
         $m$ & 20 & 22 & 48 & 25 \\
         \hline
         $u^i$ & \makecell{50  31 \\ 19  13} & \makecell{46  31 \\ 19  13  7} & \makecell{88  46  25 \\ 13  7  7} &
         \makecell{37  37 \\ 22  10  7} \\
         \hline
         $s$ & 31412 & 25150 & 50649 & 8228 \\
         \hline
         $l^i$ & \makecell{458 270 \\ 178 84} & \makecell{458 270 \\ 178 84 84} &\makecell{924 458 178 \\ 93 84 84} & \makecell{364 364 \\ 178 84  84}\\
         \hline
         $\rcenters$ & 5 & 5 & 5 & 4 \\
         \hline
         $\alpha$ & 0.56 & 0.56 & 0.56 & 0.56 \\
         \hline
    \end{tabular}
    \caption{(Near-)optimal hyperparameters that are specific to the clustering-based algorithm.}
    \label{settings_clustering}
\end{table}
 \section{Stream Ciphers as PRF}
\label{app:stream}

In the original Floram construction~\cite{doerner2017scaling,Floram_GitLab,ACK}, the PRF and the PRG used in the read-only process are chosen by the authors to be AES-128. 
The implementations of AES are highly optimized, with less than 5000 non-free gates per block~\cite{BP12}. 
As an alternative to AES, the authors also propose  the streams Salsa20~\cite{Salsa20} and its variant  Chacha20~\cite{Chacha20}. 
However, other symmetric ciphers can be used to obtain an efficient PRF/PRG.
In particular, we looked for a PRF with low number of AND gates in order to decrease the communication between the parties when it is evaluated in GC (in the Free-XOR setting).
Some of the most promising constructions are the block cipher LowMC~\cite{LowMC} and the stream cipher Kreyvium~\cite{Kreyvium} (variant of Trivium~\cite{Trivium}). 
In particular Kreyvium is flexible in terms of input and output size, since there is no fixed block size to respect, and its evaluation is very efficient in terms of AND gates per output bit of stream.
The advantage in using Kreyvium starts showing when the size of the inputs starts growing. 
In Table~\ref{tab:AESvsKrey} we estimate the number of AND gates that are executed by the different ciphers for 3 dataset sizes.
We compute 2 PRFs per input, so the actual number of AND gates in Table~\ref{tab:AESvsKrey} should be doubled.

\begin{table}[h]
	\centering
	\resizebox{\columnwidth}{!}{
		\begin{tabular}{|c|c|c|c|}
			\hline
			& $128$ bits & $2.7$ kB & $6$ kB \\
			\hline
			AES-128 & 
			\begin{tabular}{@{}c@{}} 5000 AND \\ (39 AND/bit) \end{tabular} &
			\begin{tabular}{@{}c@{}} 865000 AND \\ (39.1 AND/bit) \end{tabular} &
			\begin{tabular}{@{}c@{}} 1920000 AND \\ (39.06 AND/bit) \end{tabular} \\
			\hline
			Chacha20 & 
			\begin{tabular}{@{}c@{}} 20480 AND \\ (160 AND/bit) \end{tabular} &
			\begin{tabular}{@{}c@{}} 901120 AND \\ (40.7 AND/bit) \end{tabular} &
			\begin{tabular}{@{}c@{}} 1966080 AND \\ (40 AND/bit) \end{tabular} \\
			\hline
			Kreyvium & 
			\begin{tabular}{@{}c@{}} 3840 AND \\ (30 AND/bit) \end{tabular} & 
			\begin{tabular}{@{}c@{}} 69810 AND \\ (3.15 AND/bit) \end{tabular} & 
			\begin{tabular}{@{}c@{}} 150912 AND \\ (3.07 AND/bit) \end{tabular} \\
			\hline
		\end{tabular}
	}
	\vspace{0.1cm}
	\caption{\footnotesize Estimates on the number of AND gates for ciphers AES-128, Chacha20 and Kreyvium for different input sizes.
	The estimates for Chacha20 refer to a naive implementation of the scheme: we believe that the scheme would be more efficient in terms of non trivial gates in practice, but we did not found such optimal estimates in the literature. 
	We do not report the number of AND gates for LowMC: they should be comparable to the estimates we have for Kreyvium for an optimal choice of the parameters.}
	\label{tab:AESvsKrey}
\end{table}

While our approach is more efficient in GC with respect to Floram, the plaintext evaluation of Kreyvium is slower than the (highly optimized) hardware implementation of AES. 
In order to mitigate this issue, we vertically batch 512 bits and we compute multiple streams in parallel (using AVX-512), so we are able to process several hundreds of Mega Bytes of information per second in single core.

\section{Proofs for Approximate Top-$k$}
\label{appendix_topk_proofs}

In this section, we give proofs for Theorem~\ref{topk_expectation}
and Theorem~\ref{topk_whp}.

\begin{proof}[Proof of Theorem~\ref{topk_expectation}]
    First, suppose that we assign a bin for each element uniformly and \emph{independently}.
    For this sampling model, it is not hard to see that the desired expectation of the size of the intersection $\mathcal{I}$ is: \\
    $\mathrm{E} \left[ |\mathcal{I} | \right] = l \cdot \mathrm{Pr}[\mbox{$U_i$ contains at least one of the top-$k$ elements}] \\
     = l \cdot \left(1 - \left(1 - \frac{1}{l}\right)^k\right)
    $,
    where the first step follows from the linearity of expectation,
    and the second step is an immediate calculation.
    Suppose that $l = k / \delta$, where
    $\delta > 0$ is sufficiently small, and suppose that $k \to \infty$. \\ 
    Then, continuing the calculation, \\
   	$
    l \cdot \left(1 - \left(1 - \frac{1}{l}\right)^k\right) = \frac{k}{\delta} \cdot \left(1 - e^{k \cdot \ln \left(1 - \frac{\delta}{k}\right)}\right) \\
    = \frac{k}{\delta} \left(1 - e^{-\delta + O(1 / k)}\right) = \frac{k \cdot (1 - e^{-\delta})}{\delta} + O(1) \\
    \geq \frac{k \cdot \left(\delta - \frac{\delta^2}{2}\right)}{\delta} + O(1) = k \cdot \left(1 - \frac{\delta}{2}\right) + O(1),
    $ \\
    where the second step uses the Taylor series of $\ln x$,
    the third step uses the Taylor series of $e^x$ and
    the fourth step uses the inequality $e^{-x} \leq 1 - x + \frac{x^2}{2}$,
    which holds for small $x>0$ .
    
    To argue about the actual sampling process, where instead of uniform
    and independent
    assignment we shuffle elements and partition them into
    $l$ blocks of size $n / l$, we use
    the main result of~\cite{diaconis1980finite}.
    Namely, it is true that the probability
    $$
    \mathrm{Pr}[\mbox{$U_i$ contains at least one of the top-$k$ elements}]
    $$
    can change by at most $O(1 / l)$ when passing between two sampling processes.
    This means that the overall expectation changes by at most $O(1)$,
    and is thus still at least:
    $
    k \cdot \left(1 - \frac{\delta}{2}\right) + O(1).
    $
    For a fixed $\delta$, this expression is at least $(1 - \delta) k$,
    whenever $k$ is sufficiently large.
\end{proof}

\begin{proof}[Proof of Theorem~\ref{topk_whp}]
As in the proof of the previous theorem,
we start with a simpler sampling model, where bins are
assigned independently. Suppose that $\delta > 0$
is fixed and $k$ tends to infinity. We set $l = k^2 / \delta$.
In that case, one has: \\
$
\mathrm{Pr}[\mbox{all top-$k$ elements end up into different bins}] \\
= \left(1 - \frac{1}{l}\right) \cdot \left(1 - \frac{2}{l}\right) \cdot
\ldots \cdot \left(1 - \frac{k - 1}{l}\right) \\
= \left(1 - \frac{\delta}{k^2}\right) \cdot \left(1 - \frac{2 \delta}{k^2}\right) \cdot \ldots \cdot \left(1 - \frac{(k - 1) \cdot \delta}{k^2}\right) \\
= \exp\left(\ln\left(1 - \frac{\delta}{k^2}\right) + \ln\left(1 - \frac{2 \delta}{k^2}\right) + \ldots \right. 
\left. + \ln\left(1 - \frac{(k - 1) \cdot \delta}{k^2}\right)\right) \\
= \exp\left(- \frac{\delta (1 + 2 + \ldots + (k - 1))}{k^2} + O\left(\frac{1}{k}\right)\right) \\
= e^{-\delta / 2} + O\left(\frac{1}{k}\right)
\geq 1 - \frac{\delta}{2} + O\left(\frac{1}{k}\right),
$ \\
where the fourth step uses the Taylor series of $\ln x$
and the sixth step uses the inequality $e^{-x} \geq 1 - x$.
The final bound is at least $1 - \delta$ provided that
$k$ is large enough.

Now let us prove that for the actual sampling procedure (shuffling
and partitioning into $l$ blocks of size $n / l$), the probability
of top-$k$ elements being assigned to different bins \emph{can only increase},
which implies the desired result. To see this, let us denote $c_i$ the bin
of the $i$-th of the top-$k$ elements. One clearly has:
$
\mathrm{Pr}[\mbox{all top-$k$ elements end up into
    different bins}] = \\
\sum_{\mbox{distinct $j_1, j_2, \ldots, j_k$}} \mathrm{Pr}[
    c_1 = j_1 \wedge c_2 = j_2 \wedge \ldots \wedge c_k = j_k].
$
Thus, it is enough to show that any probability of the form
$
\mathrm{Pr}[
    c_1 = j_1 \wedge c_2 = j_2 \wedge \ldots \wedge c_k = j_k],
$
where $j_1, j_2, \ldots, j_k$ are distinct, can only increase when passing
to the actual sampling model. This probability can be factorized as follows:
$
\mathrm{Pr}[
    c_1 = j_1 \wedge c_2 = j_2 \wedge \ldots \wedge c_k = j_k] \\
= \mathrm{Pr}[
    c_1 = j_1] \cdot \mathrm{Pr}[
    c_2 = j_2 \mid c_1 = j_1] \cdot \ldots \\
\cdot
    \mathrm{Pr}[
    c_k = j_k \mid c_1 = j_1 \wedge \ldots \wedge c_{k-1} = j_{k-1}].
$ \\
For the simplified sampling model,
each of these conditional probabilities is equal to $1 / l$ due
to the independence of $c_i$.
However, for the actual model, they are larger: if we condition
on $t$ equalities, then the probability is equal to $\frac{n}{l (n - t)}$.
This implies the required monotonicity result.
\end{proof}

\section{Security Proofs}
\label{sec: proof}

We prove simulation-based security for our protocols for approximate $k$-NNS. First,
we recall the definition (see e.g. \cite{lindell2017simulate}) of two party computation and simulation-based security for semi-honest adversaries.
\begin{definition} 
A two-party functionality is a possibly randomized function
$f : \{0, 1\}^* \times \{0, 1\}^* \to  \{0, 1\}^* \times \{0, 1\}^*$, that is, for every pair of inputs $x, y \in \{0, 1\}^n$
, the output-pair is a random variable $(f_1(x, y), f_2(x, y))$. The first party (with input $x$) obtains $f_1(x, y)$ and the second party (with input $y$) obtains $f_2(x, y)$.
\end{definition}

Let $\pi$ be a protocol computing the functionality $f$. The {\it view} of the $i$-th party during an execution of $\pi$ on $(x, y)$ and security parameter $\lambda$ is denoted by $\mathrm{View}_{\pi, i}(x,y, \lambda)$
and equals the party $i$'s input, its internal randomness, plus all messages it receives  during the protocol.

\begin{definition} 
Let $f = (f_1, f_2)$ be a functionality and let $\pi$ be a protocol that computes $f$. We say that $\pi$ securely computes $f$ in the presence of static semi-honest adversaries if there exist probabilistic polynomial-time algorithms $S_1$
and $S_2$ (often called {\it simulators}) such that \\
$(S_1(1^{\lambda}, x, f_1(x,y)), f(x,y))
\approx
(\mathrm{View}_{\pi, 1}
(x, y, \lambda), f(x,y))$	
and 
$(S_2(1^{\lambda}, y, f_2(x,y)), f(x,y))
\approx
(\mathrm{View}_{\pi, 2}
(x, y, \lambda), f(x,y))$.
Here $\approx$ means computational indistinguishability. 
\end{definition}

\subsection{Ideal Functionalities}

First, we define the ideal functionalities that our protocol achieves. Note that the two protocols have slightly different ideal functionalities. We will denote them by $\Fcl$ (for clustering) and $\Fls$ (for linear scan).

\begin{figure}[ht!] \footnotesize
	\framebox{\begin{minipage}{0.95\linewidth}
			Parameters: number of elements $n$, dimension $d$, bits of precision $b_c$.
			\begin{itemize}
				\item Input: client inputs a query $\pt{q}{} \in \mathbb{R}^d$ and server inputs database $DB = [(\pt{p}{i}, \ID_i)]_{i=1}^n$. Note that points are truncated to $b_c$ bits. 
			    \item Output: returns output of Algorithm~\ref{alg: linearscan} to client. 
				\end{itemize}
	\end{minipage}}
	\caption{Ideal functionality $\Fls$}
\end{figure}

\begin{figure}[ht!] \footnotesize
	\framebox{\begin{minipage}{0.95\linewidth}
			Parameters: number of elements $n$, dimension $d$,  bits of precision $b_c$, and clustering-based hyperparameters $T$, $\kcl^i$, $m$, $u^i$, $s$, $l^i$, $\lstash$, $\bcoord$, $\rcenters$ and $\rpoints$. 
			\begin{itemize}
				\item Input: client inputs a query $\pt{q}{} \in \mathbb{R}^d$ and server inputs database $DB = [(\pt{p}{i}, \ID_i)]_{i=1}^n$. The points are truncated to $b_c$ bits. 
			    \item Output: returns output of Algorithm~\ref{alg: clustering} to client. 
				\end{itemize}
	\end{minipage}}
	\caption{Ideal functionality $\Fcl$}
\end{figure}

\subsection{Proofs}

\begin{theorem}
Assuming the hardness of the decision-RLWE problem, our linear scan protocol $\Pls$ securely implements the functionality $\Fls$ in the $\cF_{\mathrm{aTOPk}}$ hybrid model, with semi-honest adversaries. 
\end{theorem}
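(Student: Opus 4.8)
The plan is to exhibit two probabilistic polynomial-time simulators, $S_1$ for the client and $S_2$ for the server, and to show that each produces a view that is jointly indistinguishable from the real view together with the functionality's output, as required by the simulation-based definition. Since we work in the $\Fatopk$-hybrid model, the only messages I need to simulate are the AHE ciphertexts exchanged in steps~2--4 of $\Pls$, the server's share $[{\bf id}]_s$ sent in step~5, and the outputs that each party receives from the $\Fatopk$ oracle; the internal behaviour of the approximate top-$k$ selection is abstracted away and need not be reproduced.

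I would start with the server simulator $S_2$, which is the routine direction. The server's incoming messages are the $d$ ciphertexts $c_i = \mathsf{AHE.Enc}(sk, \pt{q}{}'[i])$ (the server never holds $sk$) together with the random array $w$ of size $k$ in $\{0,1\}^{\bpid}$ that $\Fatopk$ returns. Here $S_2$ samples the server randomness (the permutation and the mask ${\bf r} \in \bZ_t^n$) honestly, outputs $d$ encryptions of zero under a freshly generated key in place of the $c_i$, and samples a uniform $w$. Indistinguishability of the $c_i$ follows from IND-CPA security of the BFV scheme, which reduces to the hardness of decision-RLWE, while the simulation of $w$ is perfect since $\Fatopk$ indeed outputs a truly uniform array to the server.

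The client simulator $S_1$ is the more delicate direction and is where the main obstacle lies. $S_1$ is given $\pt{q}{}$ and the client's output ${\bf id}$ (the output of Algorithm~\ref{alg: linearscan}); it generates the client randomness including $sk$ and must reproduce the evaluated ciphertexts $f_k$ from step~4. Because the server masks by a uniform ${\bf r} \in \bZ_t^n$, the decrypted vector ${\bf s}$ is uniform over $\bZ_t^n$, so $S_1$ samples ${\bf s}$ uniformly and must then fabricate ciphertexts $f_k$ that decrypt to ${\bf s}$ and are indistinguishable from the honestly evaluated ones. This is precisely what \emph{circuit privacy} provides: since the output of $\mathsf{CMult}$ and $\mathsf{Add}$ is noise-flooded, it leaks nothing about the server's points beyond the plaintext, so $S_1$ can output canonical flooded encryptions of the sampled ${\bf s}$. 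To close the last round, $S_1$ samples $w$ uniformly for $[{\bf id}]_s$ and sets the simulated $\Fatopk$ output to $[{\bf id}]_c = {\bf id} \oplus w$, which is distributed exactly as in the real execution and forces consistency with the known output.

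I would finish with a short hybrid argument that swaps the real objects for the simulated ones one at a time: a statistical step invoking circuit privacy (the noise-flooding guarantee) replaces the honestly evaluated $f_k$ by fresh flooded encryptions of the masked inner products, a second step uses the uniformity of ${\bf r}$ to replace those by encryptions of a uniform ${\bf s}$, and all the XOR-sharing steps are perfectly simulated; for $S_2$ a single IND-CPA hybrid suffices. The hard part will be the client side, namely verifying that circuit privacy is strong enough for the simulator to manufacture $f_k$ from the plaintext alone and that the concrete noise-flooding parameters chosen for BFV make the evaluated and the canonical ciphertext distributions statistically close, so that no residual information about the server's database leaks through the ciphertext noise beyond the masked inner products.
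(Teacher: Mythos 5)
Your proposal is correct and follows essentially the same route as the paper's proof: the server's view is simulated by replacing the query ciphertexts with encryptions of zero (reducing to IND-CPA, hence decision-RLWE), and the client's view is simulated by producing circuit-private (noise-flooded) encryptions of uniformly random values in place of the masked evaluated ciphertexts $f_k$, with the $\Fatopk$ outputs being uniform shares. Your treatment is in fact somewhat more explicit than the paper's, which glosses over the final consistency step where the simulated share $[{\bf id}]_s$ must be chosen so that the XOR reconstructs the known output ${\bf id}$, a detail you handle correctly.
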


\begin{proof}
First, we construct a simulator for the client. The simulator generates a key $sk$ for the AHE scheme and sends $sk$ to the client. Then, it simulates the server's first message as $\mathsf{AHE.Enc}(sk, r_i)$ for random values $r_i$. From the circuit privacy property of the AHE scheme, this is indistinguishable from the first message in the real protocol. 
Next, the simulator simply feeds $\{r_i\}$ to the ideal functionality $\cF_{\mathrm{aTOPk}}$ and forwards the output to the client. This completes the simulation. 

Next, we construct a simulator for the server. The simulator generates a key $sk$ for the AHE scheme. The first message from the client to the server consists of the encryptions $\mathsf{AHE.Enc}(sk, \pt{q}{}[i])$ in the real protocol. Instead, the simulator just sends $\mathsf{AHE.Enc}(sk, 0) $ for $1 \leq i \leq d$. Based on the RLWE assumption, these views are indistinguishable. Finally, the simulator generates random vector $R = (r_1, \ldots, r_n)$ and sends that to the server. 
\end{proof}

\begin{theorem}
Assuming the hardness of the decision-RLWE problem, our clustering protocol $\Pcl$ securely implements the $\Fcl$ functionality in the ($\cF_{\mathrm{TOPk}}$, $\cF_{\mathrm{aTOPk}}$, $\cF_{\mathrm{DROM}}$)-hybrid model in the presence of semi-honest adversaries. 
\end{theorem}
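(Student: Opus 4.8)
The plan is to mirror the structure of the linear-scan proof, constructing probabilistic polynomial-time simulators $S_1$ (client) and $S_2$ (server) and arguing computational indistinguishability of the simulated and real views. Working in the $(\Ftopk, \Fatopk, \cF_{\mathrm{DROM}})$-hybrid model is the key simplification: essentially every nontrivial computation in $\Pcl$ is funneled through one of these ideal functionalities, and by their specifications each returns to a given party a \emph{fresh uniformly random} secret share (a masked ID list, a masked distance vector, or a one-time-pad-masked block $DB[\cdot] \oplus R$). Thus a simulator can reproduce all of these return values simply by sampling uniform strings of the correct length, without knowing the other party's input. The only genuine messages exchanged outside the ideal functionalities are the AHE ciphertexts used in the distance-computation subroutine (lines 3--6 of Figure~\ref{fig: linear scan protocol}), so those are the only places where a cryptographic assumption is invoked.

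For the client simulator $S_1$, I would first run $\mathsf{AHE.Keygen}$ to obtain $sk$ (matching step~1). The incoming server messages are the ciphertexts $f_k$ produced in each distance-computation call; in the real protocol these are additively masked by the server's random vector ${\bf r}$, so by \emph{circuit privacy} of the BFV scheme $S_1$ can replace each one with $\mathsf{AHE.Enc}(sk, r_i)$ for freshly sampled $r_i$, which is indistinguishable from the real ciphertext. Every output the client receives from $\Fatopk$, $\cF_{\mathrm{DROM}}.\mathsf{Read}$, and $\Ftopk$ is simulated by sampling uniform shares; the only exception is the final $\Ftopk$ output in step~9, whose IDs $S_1$ sets so that, after XORing with the server's shares, they equal the true output of Algorithm~\ref{alg: plain clustering} delivered by $\Fcl$. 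For the server simulator $S_2$, the incoming client messages are the query ciphertexts $\mathsf{AHE.Enc}(sk, \pt{q}{}'[i])$; by IND-CPA security of BFV (which follows from decision-RLWE) these can be replaced by $\mathsf{AHE.Enc}(sk, 0)$. All ideal-functionality return values the server sees are again uniform shares and are sampled directly; in particular the server's internal random shuffles of cluster centers and stash points (step~4) are simulated honestly from $S_2$'s own randomness and leak nothing, since their effect is confined to inputs of $\Fatopk$.

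Indistinguishability then follows from a standard hybrid argument: starting from the real execution, I would replace, one ideal-functionality call at a time, each party's received share by a uniform string (a perfect step, by the definition of the functionalities), then replace the masked distance ciphertexts using circuit privacy, and finally replace the client's query ciphertexts using RLWE. The main obstacle I anticipate is \emph{bookkeeping the composition correctly} rather than any single hard reduction: one must verify that the secret-share conversions (the B2A step after $\cF_{\mathrm{DROM}}.\mathsf{Read}$) and the chained outputs $\Fatopk \to \cF_{\mathrm{DROM}} \to \Ftopk \to \Ftopk$ are consistent, so that the simulator's independently sampled shares reconstruct exactly to intermediate values that are themselves uniformly distributed conditioned on the final output. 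Care is needed to ensure that conditioning on the functionality output $\Fcl(\pt{q}{}, DB)$ does not over-constrain the earlier simulated shares; because each functionality re-randomizes its output with an independent mask ($w$, $s$, or $R$), the earlier shares remain uniform and independent of the final output, so the simulation goes through.
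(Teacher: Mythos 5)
Your proposal is correct and follows essentially the same route as the paper's own proof: both construct simulators in the $(\cF_{\mathrm{TOPk}}, \cF_{\mathrm{aTOPk}}, \cF_{\mathrm{DROM}})$-hybrid model that replace all ideal-functionality outputs with uniform shares, invoke circuit privacy of the AHE scheme to simulate the server's masked distance ciphertexts for the client, and invoke IND-CPA security under decision-RLWE to replace the client's query ciphertexts with encryptions of zero for the server. If anything, your treatment is slightly more explicit than the paper's about embedding the true $\Fcl$ output into the final $\cF_{\mathrm{TOPk}}$ call and about the hybrid ordering, but the underlying argument is the same.
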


\begin{proof}
Again correctness is easy to verify. We first describe simulator for the client. First, the simulator generates a secret key $sk$ for the AHE scheme and sends $sk$ to the client. Next, the simulator sends blocks of zero to $\cF_{\mathrm{DROM}}.\mathsf{Init}$. Then, on receiving the query message from the client, the simulator does the following: for each $i, j$, it samples random values $r_{ij}$ and generates $\mathsf{AHE.Enc}(sk, r_{ij})$. Using a similar argument as in the previous proof, these ciphertexts are indistinguishable from the client's view of the first step of the secure protocol. Then, the simulator forwards the $r_{ij}$ to $\cF_{\mathrm{aTOPk}}$ and gets back secret shares of indices, namely $[i_1], \ldots, [i_u]$. Then, it feeds these indices shares to $\cF_{\mathrm{DROM}}.\mathsf{Read}$ and forwards the output to the client. Also, it samples random messages $s_i$ and sends $\mathsf{AHE.Enc}(sk, s_{i})$ to the client. Later, when the simulator receives the shares $m \cdot \uall+s$ of (point, ID) pairs from the client, it samples $m \cdot \uall+s$ random pairs of values and send the first $m \cdot \uall$ values to $\cF_{\mathrm{TOPk}}$ and the last $s$ values to $\cF_{\mathrm{aTOPk}}$. Then, it forwards the output to the client. Since the intermediate values revealed to the client are all independent uniformly random values, the view generated from simulator is indistinguishable from the real view. 
Now, the simulator for server works in almost the same fashion, with the difference that in contrast to the real client which sends $\mathsf{AHE.Enc}(sk, \pt{q}{i})$ for $1 \leq i \leq d$, the simulator simply sends $d$ encryption of zeros. This is indistinguishable from uniform, based on the RLWE assumption. 
\end{proof}

\end{document}